\documentclass[12pt]{article}

\usepackage{setspace} 
\usepackage[vmargin = 1.25in, hmargin =1.25in]{geometry}

\usepackage[dvipsnames,svgnames]{xcolor}
\usepackage{amsmath, amssymb, amsfonts, graphicx, tikz,pdflscape, mathtools, amsthm, upgreek, bm,pgfplots,csvsimple,multirow, multicol, booktabs,bbm,tabularx,array}
\usepackage{verbatim}
\usepackage[longnamesfirst]{natbib}
\setcitestyle{citesep={,}}
\usepackage{accents}

\usepackage{needspace}
\def\citeapos#1{\citeauthor{#1}'s (\citeyear{#1})}
\usepackage{comment}
\usepackage[inline]{enumitem}
\usepackage{tocvsec2}
\usepackage{threeparttable}
\usetikzlibrary{calc, cd, arrows}
\usetikzlibrary{decorations.pathreplacing}
\usetikzlibrary{positioning}
\usetikzlibrary{calc} 
\usetikzlibrary{arrows}
\usetikzlibrary{patterns}
\usetikzlibrary{decorations.markings}
\usetikzlibrary{shapes.misc}
\usetikzlibrary{matrix,shapes,arrows,fit,tikzmark}
\usetikzlibrary{cd}
\usepgfplotslibrary{fillbetween}
\usetikzlibrary{intersections}
\usepackage[skip = 10pt]{caption}
\usepackage{subcaption}
\allowdisplaybreaks[1]
\allowdisplaybreaks[1]

\definecolor{Blue}{RGB}{86,180,233}
\definecolor{Orange}{RGB}{230,159,0}
\definecolor{Green}{RGB}{0,158,115}
\definecolor{GmailBlue}{RGB}{42, 93, 176} 
\usepackage[
	pagebackref,
	colorlinks=true,
	citecolor= GmailBlue,
	linkcolor=GmailBlue,
	urlcolor = GmailBlue
]{hyperref}

\newcommand{\bibtexorder}[1]{}

\usepackage{pgfplots}
\usepgfplotslibrary{groupplots,colorbrewer}
\pgfplotsset{compat=newest}
\pgfplotsset{cycle list/Set1}
\usepackage{tikz}
\usetikzlibrary{matrix,calc,shapes,arrows.meta,positioning}
\tikzset{
    vertex/.style = {shape=circle,draw, minimum size = 1.8em, inner sep = 0pt},
    edge/.style = {->,> = latex}
}

\usepackage[capitalize,noabbrev]{cleveref}

\makeatletter
\AddToHook{cmd/appendix/before}{\def\cref@section@alias{appendix}\def\cref@subsection@alias{appendix}}
\makeatother

\newtheoremstyle{break}
{}
{}
{\itshape}
{}
{\bfseries}
{}
{\newline}
{}

\theoremstyle{break}
\newtheorem{thm}{Theorem}
\newtheorem*{theorem*}{Theorem}
\newtheorem*{cor*}{Corollary}
\newtheorem{cor}{Corollary}
\newtheorem{prop}{Proposition}
\newtheorem{lem}{Lemma}

\crefname{prop}{Proposition}{Propositions}
\crefname{thm}{Theorem}{Theorems}
\crefname{lem}{Lemma}{Lemmas}
\crefname{blem}{Lemma}{Lemmas}

\theoremstyle{definition}
\newtheorem{defn}{Definition}

\newtheorem{rem}{Remark}
\newtheorem*{rem*}{Remark}
\newtheorem*{claim*}{Claim}

\def\a{\alpha}

\def\d{\delta}
\def\e{\varepsilon}

\def\h{\eta}
\def\th{\theta}

\def\k{\kappa}
\def\l{\lambda}

\def\s{\sigma}

\def\x{\chi}

\def\D{\Delta}
\def\Th{\Theta}

\def\N{\mathbf{N}}

\def\R{\mathbf{R}}

\def\HH{\mathcal{H}}

\DeclareMathOperator{\supp}{supp} 
\DeclareMathOperator*{\argmax}{argmax}

\DeclareMathOperator{\dev}{dev}
\DeclareMathOperator{\ob}{ob}
\DeclareMathOperator{\marg}{marg} 

\newcommand{\Abs}[1]{\left\lvert #1 \right\rvert}

\newcommand{\Brac}[1]{\left[ #1 \right]}

\newcommand{\Set}[1]{\left\{ #1 \right\}}

\usepackage{datetime}
\newdateformat{specialdate}{\THEDAY~\monthname[\THEMONTH] \THEYEAR}

\title{The Robustness of Sequential Implementation under Information Perturbations\thanks{We thank Pierpalo Battigalli  for  helpful comments. For financial support, we thank the National Science Foundation  grant SES-2417162 and the Deutsche Forschungsgemeinschaft (DFG, German Research Foundation) under Germany's Excellence Strategy – EXC-2047/2 – 390685813.}}
\author{Ian Ball\thanks{Department of Economics, MIT, \texttt{ianball@mit.edu}.} \and Drew Fudenberg\thanks{Department of Economics, MIT, \texttt{drewf@mit.edu}.} \and Stephen Morris\thanks{Department of Economics, MIT, \texttt{semorris@mit.edu}.}}
\date{\specialdate\today}

\begin{document}

\maketitle

\begin{abstract}
Bayesian monotonicity is a necessary condition for full implementation in Bayes--Nash equilibrium. Under the sequential equilibrium refinement, however, multi-stage mechanisms can expand the scope for implementation. We show that this additional implementation power is fragile. Whenever a multi-stage mechanism sequentially implements a social choice correspondence that is not Bayesian monotone, we construct arbitrarily small ``information perturbations'' and show that there are sequential equilibria under these perturbations that yield inadmissible decisions with probability bounded away from zero. Under these perturbations, players' types are not independent conditional on the state and some player is sometimes uncertain of their payoff type.  We show by example that the general result fails if perturbations are required to satisfy conditional independence or ``known payoff types.''
\end{abstract}

\newpage

\section{Introduction}

The incomplete-information implementation problem asks which social choice correspondences can be fully implemented in pure Bayes--Nash equilibrium when players have privately observed payoff-relevant characteristics, which we call \emph{payoff types}.  
Multi-stage mechanisms and the sequential equilibrium refinement can expand the
scope for implementation relative to static mechanisms and Bayes--Nash equilibrium.
We show that this additional implementation power is not robust to arbitrarily small ``information perturbations'' that change the information available to players. In these perturbations, players observe private signals rather than
directly observing their payoff types. The signals can be arbitrarily informative
about the payoff types, and the distribution of payoff types is unchanged. We also give an example in which the additional implementation power of sequential equilibrium is robust to all sufficiently small perturbations that satisfy an additional property---either ``known payoff types'' or conditional independence. (These properties are described below and formally defined in Section~\ref{sec:KPT}.)

We consider a social choice environment with a common prior. Each player knows their own payoff type, but they do not know the payoff types of their opponents. We allow for correlated payoff types and interdependent values. Suppose that a multi-stage mechanism sequentially implements a social choice correspondence that  is not Bayesian monotone, and so cannot be fully implemented in pure Bayes--Nash equilibrium \citep{Jackson1991}.   Our main result, \cref{res:fragile}, constructs arbitrarily small information perturbations and pure sequential equilibria under these perturbations that induce decisions outside the social choice correspondence with probability bounded away from zero. This is an incomplete-information analogue of the complete-information result in \cite{AghionEtAl2012}. The incomplete-information setting introduces new challenges that require a substantially different global construction with richer signal spaces and nondegenerate off-path beliefs.

The intuition for our result is as follows. Consider a multi-stage mechanism that fully implements a social choice correspondence $F$ in pure sequential equilibrium. If $F$ is not Bayesian monotone,  the mechanism has a ``good'' sequential equilibrium that induces a social choice function in $F$ and a ``bad'' Bayes--Nash equilibrium (BNE) that induces a social choice function outside $F$. By the definition of full implementation, this bad BNE cannot be supported as a sequential equilibrium of the mechanism. We construct a family of perturbed information structures that converge to the baseline environment (in a sense that we formalize). In each perturbed information structure, each player observes a private signal rather than their true payoff type. We construct a pure sequential equilibrium in which the players coordinate on the bad BNE with high probability; otherwise, they coordinate on the good sequential equilibrium. From their signal realization, each player can learn their own payoff type with high confidence, but at least one player does not always learn their payoff type perfectly. If a player observes a deviation by one of their opponents, they infer that their opponent observed a particular revealing signal realization. They conclude that the players are actually coordinating on the moves specified by the good sequential equilibrium under the true payoff-type profile. Given this inference, sequential optimality follows. 

In our constructed information structures, players do not always know their own payoff types, and players' signals are correlated conditional on the true payoff-type profile. We next show by example that these properties are necessary for the general result.  To do this formally, we consider two restricted classes of information structures. An information structure satisfies \emph{known payoff types (KPT)} if each player always knows their own payoff type. An information structure satisfies \emph{conditional independence (CI)} if the players' types are statistically independent, conditional on the true payoff-type profile. For each of these restricted perturbation classes, we construct an example of a social choice function that is not Bayesian monotone but can be sequentially implemented by a multi-stage mechanism in a way that is robust to all sufficiently small information perturbations in the restricted class. That is, the probability of an inadmissible outcome vanishes with the size of the perturbation under every pure sequential equilibrium. Formally, these examples show that the general non-robustness conclusion in \cref{res:fragile} no longer holds if the information perturbations are restricted to either the KPT or CI classes. 

The rest of the paper is organized as follows. \cref{sec:lit} discusses related literature. \cref{sec:example} introduces a simple example that illustrates the main construction. \cref{sec:setting} introduces the general social choice environment and standard implementation notions. \cref{sec:perturbations} defines information perturbations. \cref{sec:fragility} presents the main non-robustness result. In \cref{sec:KPT}, we show that the general non-robustness conclusion fails if perturbations are required to satisfy either KPT or CI. \cref{sec:discussion} concludes. \cref{sec:proof_lemmas} contains the proofs.

\subsection{Related literature} \label{sec:lit}

As illustrated in 
\cref{tab:implementation-literature}, our paper contributes to a line of work on implementation under both complete and incomplete information. With complete information, Maskin monotonicity is a necessary condition for Nash implementation \citep{Maskin1999}.\footnote{\cite{Maskin1999} also shows that with three or more players, under a no-veto power condition, Maskin monotonicity is sufficient for Nash implementation.}  Subsequently, \cite{MooreRepullo1988} show that multi-stage mechanisms can implement many social choice functions that violate Maskin monotonicity. \cite{AghionEtAl2012} show, however, that such implementation is not robust to small information perturbations.\footnote{In a different part of their paper, they also show that the truthful equilibria in the specific mechanisms in \cite{MooreRepullo1988} do not remain equilibria under small information perturbations.} The perturbations they consider are similar in spirit to those in \cite{FKL1988}. 

With incomplete information, there has been a parallel development. Bayesian monotonicity is a necessary condition for full implementation in pure Bayes--Nash equilibrium \citep{Jackson1991}. Subsequently, it was shown that under extensive-form refinements, multi-stage mechanisms can fully implement some  social choice functions that violate Bayesian monotonicity. \cite{Brusco1995} introduces necessary and sufficient conditions for full implementation in perfect Bayesian equilibrium. \cite{BerginSen1998} and \cite{Baliga1999} provide sufficient conditions for full implementation in pure sequential equilibrium.\footnote{In a quasilinear setting, \cite{Duggan1998} shows that incentive compatibility is sufficient for extensive-form implementation. \cite{Brusco2006} characterizes full implementation in perfect Bayesian equilibrium by a class of two-stage mechanisms with a public signal after the first stage.} Our paper shows that this additional implementation power is not robust to small information perturbations.

\newcommand{\topcell}[2]{%
    \parbox[t]{#1}{\raggedright\strut #2\strut}%
}

\begin{table}[ht]
    \centering
    \small
    \setlength{\tabcolsep}{7pt}
    \begin{tabular}{@{}lll@{}}
        \toprule
        \topcell{0.27\linewidth}{}
            & \topcell{0.285\linewidth}{Complete information}
            & \topcell{0.285\linewidth}{Incomplete information} \\
        \midrule
        \addlinespace[0.5em]
        \topcell{0.27\linewidth}{Static implementation}
            & \topcell{0.285\linewidth}{\cite{Maskin1999}}
            & \topcell{0.285\linewidth}{\cite{Jackson1991}} \\
        \addlinespace[0.7em]
        \topcell{0.27\linewidth}{Multi-stage implementation}
            & \topcell{0.285\linewidth}{\cite{MooreRepullo1988}}
            & \topcell{0.285\linewidth}{%
                \cite{Baliga1999,BerginSen1998,Brusco1995}} \\
        \addlinespace[0.7em]
        \topcell{0.27\linewidth}{Fragility under information perturbations}
            & \topcell{0.285\linewidth}{\cite{AghionEtAl2012}}
            & \topcell{0.285\linewidth}{This paper} \\
        \addlinespace[0.35em]
        \bottomrule
    \end{tabular}
    \caption{Robustness of implementation under complete and incomplete information}
    \label{tab:implementation-literature}
\end{table}

Under incomplete information, there is an important new consideration of whether information perturbations are required to satisfy the known-payoff-type (KPT) property.  In a complete-information setting in which the state is a vector of payoff types, \cite{EcclesWegner2016} show that sequential implementation in two-stage mechanisms is robust to KPT perturbations, but sequential implementation in three-stage mechanisms may not be. Also with complete information, \cite{ChenHoldenKunimotoSunWilkening2023} provide a two-stage mechanism that implements every social choice function in a way that is robust to ``private-value perturbations,'' which are slightly more general than KPT perturbations.  We do not provide a general positive robustness result for KPT perturbations, but we show that our general non-robustness result does not hold under a restriction to KPT perturbations.

In the complete-information setting, \cite{PalfreySrivastava1991} show that the static refinement of undominated Nash equilibrium substantially expands the implementation possibilities. 
\cite{ChungEly2003} show that this additional implementation power is not robust to information perturbations.

Our paper also relates to the literature on robustness of equilibrium refinements
under elaborations of a benchmark game following  \citet{FKL1988}. In complete-information normal-form games, \citet{DekelFudenberg1990} characterize $S^\infty W$---one round of deletion of weakly dominated strategies followed by iterated deletion of strongly dominated
strategies---using elaborations with vanishing payoff uncertainty, while
\citet{Borgers1994} gives a related epistemic characterization in terms of
approximate common certainty of admissibility.  \citet{FrickRomm2015} show that when extending these results to incomplete information, the conclusion depends on whether
the baseline types retain exactly their original beliefs about payoff-relevant
states or are instead allowed arbitrarily small perturbations of those beliefs. In
the latter case, the set of predicted behavior is generally strictly larger. Our
result identifies a related sensitivity for sequential implementation: preserving
the ex-ante distribution of payoff types does not by itself preserve the baseline
sequential-equilibrium prediction, because allowing players to be uncertain about
their own payoff-type components can change the off-path inferences that support
sequential behavior.

Another strand of the literature focuses on refinements of rationalizability rather than Nash equilibrium. By the structure theorem in \cite{WeinsteinYildiz2007}, under a richness condition, refinements of rationalizability cannot yield robust predictions sharper than the predictions of rationalizability itself. Under incomplete information, \cite{Muller2016} shows that many social choice functions that are not virtually implementable under rationalizability can nevertheless be virtually implemented under strong rationalizability \citep{BattigalliSiniscalchi1999,BattigalliSiniscalchi2003}, which imposes a form of  forward induction.  \citet{BattigalliSiniscalchi2007} show that strong rationalizability is not robust to some forms of belief perturbations, while \cite{battigalli2026monotonicity} show that full implementation with respect to strong rationalizability   is robust to the planner lacking knowledge of the type space that represented the agents' possible belief hierarchies.  \cite{PiermontZuazoGarin2026} and \cite{Wang2026} analyze the robustness of strong rationalizability to particular classes of perturbations.

\section{Motivating example} \label{sec:example}

We begin with a simple example that illustrates  the additional power of implementation in pure sequential equilibrium and the non-robustness of such implementation. There are two players, denoted $i= 1,2$. Each player $i$ observes their payoff type $\th_i \in \Th_i = \{ 0,1\}$, which is drawn uniformly and independently. The set of decisions is 
\[
    A = \Set{ x^{\th', c, d}: \th' \in \Th~\text{and}~c,d \in \{0,1\}^2 }.
\]
Utility functions will be specified below. Consider the following three-stage mechanism. The structure of reporting, challenging, and defending is inspired by an example in \cite{MooreRepullo1988}.
\begin{enumerate}
    \item Each player $i$ reports $\th_i' \in \Th_i$. Play proceeds to stage 2. 
    \item After observing the reports in stage $1$, each player $i$ chooses whether to challenge ($c_i =1$) or not ($c_i = 0$). If neither player challenges, the game ends and the decision is $x^{\th'}\coloneq x^{\th', 0, 0}$. Otherwise, play proceeds to stage 3.
    \item After observing the challenge decisions in stage $2$, each player who was challenged chooses whether to defend ($d_i =1 $) or not ($d_i = 0$). The game ends and the decision is $x^{ \th', c, d}$.\footnote{We adopt the convention that $d_i = 0$ whenever $c_{-i} = 0$.} 
\end{enumerate}

The players have private values:
\begin{equation} \label{eq:u_decomposition}
    u_i ( x^{\th', c, d}, \th_i) = c_{-i} \cdot  v_i (\th_i', d_i, \th_i)  + c_i \cdot w_i(d_{-i}), 
\end{equation}
where 
\[
    v_i (\th_i', d_i, \th_i) =
    \begin{cases}
        -1 + d_i &\text{if}~\th_i = \th_i', \\
        -3 - d_i &\text{if}~\th_i \neq \th_i',
    \end{cases}
\]
and 
\[
    w_i(d_{-i}) = 2 - 3 d_{-i}.
\]
Player $i$'s payoff in \eqref{eq:u_decomposition} has two components: $v_i$ captures the effect of being challenged, and $w_i$ captures the effect of challenging. Upon being challenged, it is strictly optimal for a player to defend if they reported truthfully and not to defend if they misreported. Challenging their opponent changes the player's utility by $2$ if the opponent does not defend but by $-1$ if the opponent defends. 

\subsection{Equilibria} \label{sec:equilibria} First we consider pure sequential equilibria of the mechanism. We claim that the unique sequential equilibrium strategy profile is as follows. In stage $1$, each player reports truthfully. In stage $2$, neither player challenges. In stage $3$, each player who was challenged in stage $2$ defends if and only if they reported truthfully in stage $1$. To verify this claim, we work backwards. The sequentially optimal behavior in stage $3$ is pinned down, no matter what each player believes about their opponent's payoff type. In stage $2$, Bayes' rule implies that after each report profile, each player is certain that their opponent reported truthfully and hence will defend against a challenge, so challenging is strictly suboptimal. Finally, in stage 1, players are indifferent over all report profiles, so in particular, truthtelling is optimal. Consistency can be established from a suitable sequence of totally mixed strategies.

It remains to show that in every sequential equilibrium, each type of each player reports truthfully in stage $1$; then sequentially optimal play is pinned down in stages 2 and 3, by the argument above. Suppose for a contradiction that there is a sequential equilibrium in which some type $\th_i$ of some player $i$ reports $\th_i' = 1 - \th_i$. Upon seeing this report $\th_i'$, player $-i$ will challenge in stage $2$ because they infer that with probability at least $1/2$ player $i$ misreported and hence will not defend (this probability is $1/2$ if type $1 - \th_i$ reports truthfully and $1$ otherwise). Therefore, type $\th_i$'s payoff from reporting $\th_i' = 1- \th_i$ is at most $-3 + 2 = -1$. But type $\th_i$ can guarantee a payoff of $0$ by 
 instead reporting truthfully, not challenging, and then defending if challenged. This is a contradiction.

On the other hand, the mechanism admits additional pure Bayes--Nash equilibria. Here is one example. In stage $1$, each player misreports their type. In stage $2$, neither player challenges. In stage $3$, each player who was challenged in stage $2$ defends. Defending in stage $3$ after having misreported is not sequentially optimal (no matter the player's beliefs about their opponent's payoff type). But anticipating this defense, it is optimal for each player not to challenge in stage $2$. And anticipating no challenge, each player is indifferent over all reporting strategies in stage $1$. 

Consider the social choice function $f \colon \Th \to A$ defined by $f(\th) = x^{\th}$. Our equilibrium analysis shows that (a) every pure sequential equilibrium of the mechanism induces $f$, but (b) there exists a pure Bayes--Nash equilibrium that induces a social choice function different from $f$. In the terminology of implementation theory, this mechanism fully implements $f$ in pure sequential equilibrium, but it does not fully implement $f$ in pure Bayes--Nash equilibrium. In fact, no mechanism fully implements $f$ in pure Bayes--Nash equilibrium; this follows from \citet{Jackson1991} because $f$ violates the condition of Bayesian monotonicity (as we verify below in the proof of \cref{res:KPT_counterexample}). Thus, moving from Bayes--Nash to sequential equilibrium makes more social choice functions implementable. We show next, however, that these undesirable Bayes--Nash outcomes can be approximately recovered as sequential equilibria if the information structure is slightly perturbed. 

\subsection{Perturbations} To perturb this setting, we suppose that each player does not directly observe their payoff type but instead observes a private signal, which we call their \emph{type}. For each player $i$, define the space $T_i$ of types by
\[
    T_i = \Set{ (t_i^{\th_i'}, z_i) : \th_i' \in \Th_i~\text{and}~z_i \in \{ 0,1 \} }.
\]
Let $T = T_1 \times T_2$. Fix $\e \in (0,1)$. Suppose that the payoff-type profile $\th$ and the type profile $t$ are drawn from the following distribution over $\Th \times T$. The distribution over payoff-type profiles is uniform, as in the unperturbed setting. For each payoff-type profile $\th \in \Th$, the conditional distribution over type profiles $t \in T$ is given by\footnote{For each player $i$, we omit the realization $(t_i^{1 - \th_i}, 1)$ since its conditional probability is $0$.}
\[
\begin{array}{c|ccc}
\pi^{\e} ( t | \th) & (t_2^{\th_2},0) & (t_2^{\th_2},1) & (t_2^{1-\th_2},0)\\
\hline
(t_1^{\th_1},0)
    & 1-\e & 0 & 0\\
(t_1^{\th_1},1)
    & 0 & 0 & \e/2\\
(t_1^{1-\th_1},0)
    & 0 & \e/2 & 0
\end{array}.
\]

This conditional distribution can be interpreted as follows.  Given the payoff-type profile $\th = (\th_1, \th_2)$, the ``default'' is for each player $i$ to receive signal $(t_i^{\th_i}, 0)$. However, with probability $\e$, the default signal profile is changed as follows. A player $i \in \{1,2\}$ is selected uniformly. The first component of player $i$'s signal is flipped from $t_i^{\th_i}$ to $t_i^{1 - \th_i}$, and the second component of player $-i$'s signal is flipped from $z_{-i} = 0$ to $z_{-i} = 1$. Under this interpretation,  each player observes only the final, possibly modified, signal. Note that player $i$ is uncertain of their own payoff type after observing $z_i  = 0$, and that the players' signals are correlated conditional on the payoff-type profile. 

Under this perturbed information structure, we claim that the following strategy profile can be supported as a sequential equilibrium. If player $i$ receives signal $(t_i^{\th_i'}, z_i)$, then they play as follows. In stage $1$, they report $1 - \th_i'$ if $z_i = 0$ and $\th_i'$ if $z_i = 1$. In stage $2$, they do not challenge. In stage $3$, if they were challenged in stage $2$, they defend if and only if they followed their equilibrium strategy in stage $1$. Below, we construct stage-3 beliefs that make the stage-3 strategy sequentially optimal. 
In stage 2, every report is on path, so each player anticipates that their opponent will defend if challenged. Hence, it is optimal not to challenge. And anticipating no challenge, each player is indifferent over all reporting strategies in stage $1$. 

To construct consistent beliefs in stage 3 so that, after being challenged, player $i$ is certain that their true payoff type coincides with the report specified by their equilibrium strategy in stage $1$, we consider two cases. 
\begin{enumerate}
    \item If player $i$'s signal is $(t_i^{\th_i'}, 0)$ then after being challenged, player $i$ infers that their opponent received a signal with $z_{-i} = 1$; this belief can be justified by a sequence of trembles in which their opponent's probability of challenging after observing $z_{-i} = 0$ becomes arbitrarily small relative to the probability of challenging after observing $z_{-i} = 1$. Player $i$ concludes that their own signal was modified from the default, so their true payoff type is $1 - \th_i'$, which is the stage-$1$ report specified by their equilibrium strategy. 
    \item If  player $i$'s signal is $(t_i^{\th_i'}, 1)$, they are certain that their own payoff type is $\th_i'$, which is the stage-$1$ report specified by their equilibrium strategy. 
\end{enumerate}

Under this sequential equilibrium, if the true payoff-type profile is $\th$, then with probability $1-\e$, each player receives the ``default'' signal and hence the induced decision is $x^{1 - \th_1, 1 - \th_2}$, which is different from $f(\th)$. Taking $\e$ arbitrarily small, this sequential equilibrium recovers the undesirable BNE outcome with arbitrarily high probability. 

Our construction here is tailored to the details of the two-player, binary-type social choice environment and the particular three-stage mechanism. In our main analysis, we provide a general construction that applies to any social choice environment (even with correlated payoff types and interdependent values), any finite multi-stage mechanism with observed actions, and any social choice correspondence that violates Bayesian monotonicity. 

\section{Implementation setting} \label{sec:setting}

\paragraph{Environment}
Consider a Bayesian social choice environment  $(\Th, \pi, A, (u_i)_{i=1}^{n})$, where $\Th = \prod_{i=1}^{n} \Th_i$ is a finite space of payoff-type profiles, $\pi \in \D(\Th)$ is a common prior with full support, $A$ is a finite set of collective decisions, and $u_i \colon A \times \Th \to \R$ is player $i$'s utility function. We allow for interdependent values: player $i$'s utility can depend on others' payoff types. We also allow players' payoff types to be correlated under the prior $\pi$. 

\paragraph{Social choices} A \emph{social choice function} (SCF) is a function $f \colon \Th \to A$, which assigns to each profile $\th$ a decision $f(\th)$. 
A \emph{social choice correspondence (SCC)} is a nonempty-valued correspondence $F \colon \Th \twoheadrightarrow A$, which assigns to each profile $\th$ a nonempty set $F(\th)$ of decisions. 

For each player $i$ and payoff type $\th_{i} \in \Th_i$, define the induced interim preference relation $\succsim_{\th_i}$ on the space of social choice functions as follows: 
\[
    f \succsim_{\th_i} f' \iff 
    \sum_{\th_{-i} \in \Th_{-i}} u_i ( f(\th), \th) \pi_{-i} (\th_{-i} | \th_i) \geq \sum_{\th_{-i} \in \Th_{-i}} u_i ( f'(\th), \th)\pi_{-i} (\th_{-i} | \th_i). 
\]
Define the strict relation $\succ_{\th_i}$ analogously with a strict inequality in place of the weak inequality.


\paragraph{BNE implementation}
We largely follow the framework of \cite{Jackson1991}. A mechanism is a pair $(M,g)$ consisting of a set $M = \prod_{i=1}^{n} M_i$ of message profiles and an outcome function $g \colon M \to A$. Given a mechanism $(M,g)$, a strategy for player $i$ is a map $\s_i \colon \Th_i \to M_i$. We view a strategy profile $\s = (\s_i)_{i=1}^{n}$ as a map from $\Th$ to $M$ via $\s(\th) = (\s_i(\th_i))_{i=1}^{n}$.  A strategy profile $\s$ is a \emph{Bayes--Nash equilibrium (BNE)} if for each player $i$, type $\th_i$, and strategy $\s_i'$, we have $g \circ \s \succsim_{\th_i} g \circ ( \s_i', \s_{-i})$. Following \cite{Jackson1991}, we define a BNE to be a \emph{pure} strategy profile. 


\begin{defn}[BNE implementation] A mechanism $(M,g)$ \emph{BNE implements} an SCC $F$ if the following hold: 
\begin{enumerate}
    \item For each BNE $\s$ of $(M,g)$, the map $g \circ \s$ is a selection from $F$;
    \item For each selection $f$ from $F$, there exists a BNE $\s$ of $(M,g)$ such that $g \circ \s = f$. 
\end{enumerate}
An SCC $F$ is \emph{BNE-implementable} if there exists a mechanism $(M,g)$ that BNE implements $F$. 
\end{defn}

\cite{Jackson1991} introduces a necessary condition for BNE implementation, termed Bayesian monotonicity. To state the definition, we need a few preliminaries. A \emph{deception} for player $i$ is a function $\a_i \colon \Th_i \to \Th_i$.  We view a deception profile $\a = (\a_i)_{i=1}^{n}$ as a map from $\Th$ to $\Th$ via $\a( \th) = (\a_i (\th_i))_{i =1}^{n}$. Given an SCF $f$, define for any player $i$ and type $\bar{\th}_i$ the SCF $f_{i,\bar{\th}_i}$ by $f_{i,\bar{\th}_i} (\th_i, \th_{-i}) = f( \bar{\th}_i, \th_{-i})$.

\begin{defn}[Bayesian monotonicity] A social choice correspondence $F$ is \emph{Bayesian monotone} if for each selection $f$ from $F$, and each deception profile $\a$ such that $f \circ \a$ is not a selection from $F$, there exists some player $i$, payoff type $\bar{\th}_i \in \Th_i$, and social choice function $f'$ such that $f' \circ \a \succ_{\bar{\th}_i} f\circ \a$ and, for each $\th_i \in \Th_i$, we have $f \succsim_{\th_i} f'_{i, \a_i(\bar{\th}_i)}$.
\end{defn}

From the necessity proof of Theorem 1 in \citet{Jackson1991}, Bayesian monotonicity is necessary for BNE implementability. The interpretation is that, in order to ensure that it is not an equilibrium for the players to follow the deception profile $\a$, there must be some SCF $f'_{i, \a_i(\bar{\th}_i)}$ that can be used in place of $f$ to reward type $\bar{\th}_i$ for ``blowing the whistle'' on their opponents for following $\a$. The conditions require that type $\bar{\th}_i$ strictly prefers to blow the whistle when the players follow $\a$, but that no type of player $i$ strictly prefers to blow the whistle if there is no deception. 

\paragraph{Sequential implementation}

Following \cite{AghionEtAl2012}, we consider sequential implementation using finite multi-stage mechanisms with observed actions. Formally, such a mechanism is a triple $(\HH, \HH', g)$, where $\HH$ is the finite set of nonterminal histories,  $\HH'$ is the finite set of terminal histories, and $g \colon \HH' \to A$ is the outcome function. The set $\HH \cup \HH'$ of all histories contains the initial history $\varnothing$ and is closed under the predecessor-of relation. For each history $h \in \HH \cup \HH'$, let
\[
    M (h) = \Set{ m : (h,m) \in \HH \cup \HH'~\text{and}~h~\text{is the direct predecessor of}~(h,m)}.
\]
For each $h \in \HH'$, we have $M(h) = \varnothing$. For each $h \in \HH$, we have $M(h) = \prod_i M_i(h)$, where each set $M_i (h)$ is nonempty. At each history $h \in \HH$, the players who are active at $h$ move simultaneously, where player $i$ is active at $h$ if and only if $|M_i(h)|>1$. All players observe these moves before choosing how to move in the next stage. Write $M_i = \prod_{h \in \HH} M_i(h)$ and $M = \prod_i M_i$. Given a profile $m$ in $M$, denote by $g(m)$ the outcome at the terminal history induced by $m$. Similarly, for each history $h \in \HH$, write $g(m;h)$ for the outcome induced by the profile $m$ after starting at the history $h$. Note that $g(m;h)$ depends only on the moves prescribed by $m$ at history $h$
and its successors.

Given a multi-stage mechanism $(\HH, \HH', g)$,  a strategy for player $i$ is a map $\s_i \colon \Th_i \to M_i$. View a strategy profile $\s = (\s_i)_{i=1}^{n}$ as a map from $\Th$ to $M$ via $\s(\th) = (\s_i(\th_i))_{i=1}^{n}$. A strategy $\s_i$ can be equivalently viewed as a function on $\Th_i \times \HH$ satisfying $\s_i (\th_i, h) \in M_i (h)$ for all $\th_i \in \Th_i$ and $h \in \HH$. A belief system for player $i$ is a map $\mu_i \colon \Th_i \times \HH  \to \D( \Th_{-i} )$, which specifies player $i$'s belief about others' payoff types as a function of their own payoff type and the history. An assessment is a pair $(\s, \mu)$ consisting of a strategy profile $\s$ and a belief-system profile $\mu$. 

Following \cite{Maskin1999}, \cite{Jackson1991}, and \cite{AghionEtAl2012}, we restrict to pure equilibrium strategy profiles. In order to define belief consistency, however, we must consider behavioral strategies. A behavioral strategy $\tilde{\s}_i$ for player $i$ assigns to each pair $(\th_i, h) \in \Th_i \times \HH$ a move distribution $\tilde{\s}_i (\th_i, h) \in \D( M_i (h))$. A behavioral strategy is \emph{totally mixed} if $\supp \tilde{\s}_i (\th_i, h) = M_i (h)$ for all $\th_i \in \Th_i$ and $h \in \HH$. Given any pure strategy $\s_i$, we denote by $\d_{\s_i}$ the equivalent behavioral strategy satisfying $\d_{\s_i} (\th_i, h) = \d_{\s_i(\th_i, h)}$.

\begin{defn}[Sequential equilibrium] \label{def:SE}
An assessment $(\s, \mu)$ is a \emph{sequential equilibrium} if the following hold: 
\begin{enumerate}
    \item \emph{Sequential optimality}: For each player $i$, payoff type $\th_i \in \Th_i$, and history $h \in \HH$, 
    \[
        \s_i ( \th_i) \in \argmax_{m_i \in M_i} \sum_{\th_{-i} \in \Th_{-i}} u_i \bigl( g ( m_i, \s_{-i} ( \th_{-i}); h), \th \bigr) \mu_i ( \th_i, h) [ \th_{-i}].
    \]
    \item \emph{Consistency}: There exists a sequence, $(\s^k, \mu^k)$,  of totally mixed behavioral strategy profiles and belief-system profiles such that for each player $i$:
    \begin{enumerate}
        \item for each $k$, the belief system $\mu_i^k$ is derived from $\s_{-i}^k$ via Bayes' rule;
        \item the sequence $(\s_i^k, \mu_i^k)$ converges to $(\d_{\s_i}, \mu_i)$ pointwise on $\Th_i \times \HH$.
    \end{enumerate}
\end{enumerate}
\end{defn}

\begin{defn}[Sequential implementation]
A multi-stage mechanism $(\HH, \HH', g)$ \emph{sequentially implements} a social choice correspondence $F$ if the following hold.
\begin{enumerate}
    \item For each sequential equilibrium $(\s, \mu)$ of $(\HH, \HH', g)$, the map $g \circ \s$ is a selection from $F$. 
    \item For each selection $f$ from $F$, there exists a sequential equilibrium $(\s, \mu)$ of $(\HH, \HH', g)$ such that $g \circ \s = f$. 
\end{enumerate}
An SCC $F$ is \emph{sequentially implementable} if there exists a multi-stage mechanism $(\HH, \HH',g)$ that sequentially implements $F$. 
\end{defn}

\section{Information perturbations} \label{sec:perturbations}

We now define information perturbations of the social choice environment $(\Th, \pi, A, (u_i)_{i=1}^{n})$. Rather than perfectly observing their true payoff type, each player observes a private signal about it, which determines their type. Following \cite{FKL1988}, we associate each payoff type of each player with a distinguished type that will play the role of the ``default'' signal in the example from \cref{sec:example}. Formally, an \emph{information structure} is a triple $(T, t^\ast, \hat{\pi})$ specifying a finite set $T = \prod_{i} T_i$ of type profiles, an injective map $t_i^\ast \colon \Th_i \to T_i$ for each player $i$, and a prior $\hat{\pi} \in \D( \Th \times T)$ such that for each player $i$, the marginal $\marg_{T_i} \hat{\pi}$ has full support on $T_i$.  This definition implies that $|T_i| \geq |\Th_i|$ for each player $i$. 

A player's type determines their beliefs about the profile of payoff types, but the types do not directly affect payoffs. We write $t^\ast (\th) = (t_i^\ast(\th_i))_{i=1}^{n}$ for the distinguished type profile associated to the payoff-type profile $\th$. Define the distribution $\pi \otimes t^\ast \in \D( \Th \times T)$ by  $(\pi \otimes t^\ast)[ \th, t^\ast (\th)] = \pi (\th)$. Thus, the original setting can be represented by  any information structure $(T, t^\ast, \hat{\pi})$ satisfying $\hat{\pi} = \pi \otimes t^\ast$.\footnote{In this case, the full marginal support condition implies that  $T_i = t_i^\ast (\Th_i)$ for all $i$.}

Given an information structure $(T, t^\ast, \hat{\pi})$ and a multi-stage mechanism $(\HH, \HH', g)$, a strategy for player $i$ is a map $\s_i \colon T_i \to M_i$. A strategy $\s_i$ can equivalently be viewed as a function on $T_i \times \HH$ satisfying $\s_i (t_i, h) \in M_i (h)$ for all $t_i \in T_i$ and $h \in \HH$. A belief system for player $i$ is a map $\mu_i \colon T_i \times \HH  \to \D( \Th \times T_{-i})$, which specifies player $i$'s belief about the payoff-type profile and others' types as a function of their own type and the history.  An assessment $(\s, \mu)$ is a sequential equilibrium if it is sequentially optimal and consistent, where these properties are defined as in \cref{def:SE}, mutatis mutandis. 

We will consider parametric families of information structures, denoted by $(T, t^\ast, \pi^{\e})_{\e \in (0,1)}$. Note that $T$ and $t^\ast$ do not vary with $\e$. Let $\| \cdot \|_{\mathrm{TV}}$ denote the total variation norm. That is, for any $\pi_1, \pi_2 \in \D( \Th \times T)$, let $\| \pi_1 - \pi_2 \|_{\mathrm{TV}} = \sup_{E} | \pi_1 (E) - \pi_2(E) |$, where the supremum is over all subsets $E$ of $\Th \times T$.

\begin{defn}[Convergence of information structures] \label{def:conv} A family of information structures, $(T, t^\ast, \pi^{\e})_{\e \in (0,1)}$,  converges to $\pi$ if
\begin{enumerate}[label = (\roman*)]
    \item \label{it:marg} $\marg_{\Th} \pi^{\e} = \pi$ for all $\e$;
    \item \label{it:limit_TV} $\lim_{\e \to 0} \| \pi^{\e} - \pi \otimes t^\ast \|_{\mathrm{TV}} = 0$.
\end{enumerate}
\end{defn}

Part~\ref{it:limit_TV} implies that $\lim_{\e \to 0} \| \marg_{\Th} \pi^\e - \pi \|_{\mathrm{TV}} = 0$. We impose the stronger requirement \ref{it:marg} because it strengthens our main non-robustness result, \cref{res:fragile}. 

To interpret this notion of convergence, consider the event $E^\ast$ that each player's type is the distinguished type for their true payoff type. Formally, $E^\ast = \{ ( \th, t^\ast (\th)): \th \in \Th \}$.  Note that $(\pi \otimes t^\ast ) (E^\ast)  = 1$. Suppose that a family $(T, t^\ast, \pi^{\e})_{\e \in (0,1)}$ converges to $\pi$. Then there exists a function $p \colon (0,1) \to [0,1]$ satisfying $\lim_{\e \to 0} p(\e) = 1$ such that for each $\e \in (0,1)$, the event $E^\ast$ has ex-ante probability at least $p(\e)$ and $E^*$ is an evident $p(\e)$-belief in the sense of \citet[p.~176]{MondererSamet1989}.\footnote{Indeed, let $p(\e) = \min \bigl\{ \pi^{\e} (E^\ast ), \min_{i, \th_i} \pi^{\e} [ E^\ast | t_i^\ast (\th_i)] \bigr\}$.}

\begin{rem}[Fixed payoffs versus payoff perturbations]
In the perturbations  we consider, the utility functions $u_i$ are fixed, and signals affect behavior only through players' beliefs. A broader class of
perturbations would allow the utility functions to depend on players' types and require only that the utility functions associated with the distinguished types converge to the baseline utility functions. As in the complete-information case studied by \cite{FKL1988},  using this broader class of perturbations can enlarge the set of baseline equilibrium behaviors that can arise as
limits of sequential  equilibria under perturbations: revealing types may provide direct incentives for
an action through their different payoffs, rather than only changing the beliefs
induced by their presence. Establishing fragility in the more restrictive
fixed-payoff case strengthens our result.
\end{rem}

\section{Fragility of sequential implementation} \label{sec:fragility}

We now show that the expanded implementation power from sequential equilibrium relative to Bayes--Nash equilibrium is not robust to small information perturbations. 

\begin{thm}[Necessity of Bayesian monotonicity]
 \label{res:fragile} Let $F$ be a social choice correspondence that is not Bayesian monotone. If $F$ is sequentially implemented (under $\pi$) by some multi-stage mechanism $(\HH, \HH', g)$, then there exists a family of information structures, $(T, t^\ast, \pi^\e)_{\e \in (0,1)}$, converging to $\pi$ and an associated family $(\s^{\e}, \mu^{\e})_{\e \in (0,1)}$ of sequential equilibria of $(\HH, \HH', g)$ such that for some $\th \in \Th$ and all $\e \in (0,1)$, the decision $g(\s^{\e} (t^\ast (\th)))$ is not in $F(\th)$. 
\end{thm}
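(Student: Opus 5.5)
The plan is to turn the failure of Bayesian monotonicity into a ``bad'' pure BNE of the mechanism that is built from the ``good'' sequential equilibrium composed with a deception. I then support this BNE as a sequential equilibrium under a perturbation, using revealing types that rationalize reversion to the good equilibrium off path, as in \cref{sec:example}.

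\textbf{Step 1 (a bad BNE from the good SE).} Since $F$ is not Bayesian monotone, fix a selection $f$ and a deception $\a$ such that $f\circ\a$ is not a selection from $F$ and the whistle-blowing condition fails for every $i$, $\bar\th_i$ and $f'$. By part 2 of sequential implementation there is a sequential equilibrium $(\s,\mu)$ with $g\circ\s=f$. Consider the profile $\s\circ\a$, in which payoff type $\th_i$ plays $\s_i(\a_i(\th_i))$.

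I claim $\s\circ\a$ is a BNE of the normal form, by Jackson's argument. Suppose type $\bar\th_i$ had a profitable deviation to a pure strategy $m_i\in M_i$. Define $f'(\th)=g(m_i,\s_{-i}(\a_{-i}(\th_{-i})))$, or more simply use $f'(\th)=g(m_i,\s_{-i}(\th_{-i}))$. Then $f'\circ\a\succ_{\bar\th_i} f\circ\a$. Moreover $f'$ does not depend on $\th_i$, so $f'_{i,\a_i(\bar\th_i)}=f'$. Because $\s$ is a BNE, $f\succsim_{\th_i}f'$ for every $\th_i$. This is exactly a whistle-blowing reward, which is a contradiction. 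Fix $\th^\circ$ with $f(\a(\th^\circ))\notin F(\th^\circ)$.

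Along the path of $\s\circ\a$, a BNE is automatically sequentially optimal at every history reached with positive probability given the player's own strategy. So the only problem is off-path behavior, which is where the perturbation will enter.

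\textbf{Step 2 (the perturbation).} Let $T_i=\Th_i\times\{0,1\}$ (or a richer finite label set), with $t_i^\ast(\th_i)=(\th_i,0)$. Choose $\pi^\e$ by generalizing the table in \cref{sec:example}. Draw $\th\sim\pi$ and default to $t^\ast(\th)$. With probability $\e$, select a player $i$ and replace their label by a ``mislabeled'' type $(\a_i^{-1}\text{-compatible }\th_i',0)$, while every other player $j$ receives the revealing flag $z_j=1$. Then condition (i) of \cref{def:conv} holds by construction, and condition (ii) holds because the total variation distance is $O(\e)$.

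The strategies are as follows. Type $(\th_i',0)$ plays $\s_i(\a_i(\th_i'))$ on the path of $\s\circ\a$, and $z=1$ types mimic this on-path play. At any history off the $\s\circ\a$ path, every player reverts to the good continuation $\s$ evaluated at the payoff type they now assign probability one to. Beliefs are supported by trembles in which deviations by flagged ($z=1$) types are infinitely more likely than deviations by default types. Hence after an unexpected move, a default player infers that the deviator is flagged, and therefore that their own label was altered. Their belief at the off-path history then coincides with the limit belief $\mu$ of the good SE at the true payoff-type profile. Sequential optimality off path is inherited from $(\s,\mu)$, and consistency is obtained by concatenating the trembles that support $\mu$ with the ranked trembles across $z$.

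\textbf{Step 3 (main obstacle).} The hard part is on-path optimality for $\e>0$. The BNE property of $\s\circ\a$ holds under the exact baseline beliefs. Under $\pi^\e$, default types hold slightly different beliefs and may be uncertain of their own payoff type, so weak optimality can fail. I would first try to design $\pi^\e$ so that, conditional on each default type and each on-path history, the induced distribution over the outcome-relevant payoff profile equals the baseline exactly. This would be done by pairing each mislabeling event with a compensating event so that the marginals cancel, which is possible because flagged types play identically on path. If exact cancellation fails, my fallback is to enlarge $T$ so that each perturbation event is paired with a mirror event that restores conditional beliefs, using the freedom in correlation that the theorem permits. Finally, the decision at $t^\ast(\th^\circ)$ is $g(\s(\a(\th^\circ)))=f(\a(\th^\circ))\notin F(\th^\circ)$ for every $\e$.
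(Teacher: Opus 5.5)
Your Step~1 is correct and is the paper's \cref{res:BNE_deception}, with your second choice $f'(\th)=g(m_i,\s_{-i}(\th_{-i}))$. Below, $\s$ is your good equilibrium, the paper's $\hat{\s}$.

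\textbf{The on-path gap.} The real gap is Step~3, which you leave open, and both repairs you sketch aim at something impossible. Take a default type $(\th_i',0)$ with $\a_i(\th_i')\neq\th_i'$. For it to conclude, after an opponent's deviation, that its payoff type is $\a_i(\th_i')$, consistency forces $\pi^\e(\cdot\mid t_i)$ to give that payoff type positive probability. So its beliefs over $\Th$ can never equal the baseline beliefs of payoff type $\th_i'$ under $\s\circ\a$. (\cref{res:KPT_counterexample} shows this own-type uncertainty cannot be avoided in general.)

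The missing idea is that no cancellation is needed, only a mixture argument. The paper uses a latent $Z$. With probability $1-\e$, types are $t^\ast(\th)$. With probability $\e$, they are drawn from a \emph{product} kernel $\k=\otimes_i\k_i$ under which each $t_i$ pins down $\th_i=\bar{\a}_i(t_i)$, where $\bar{\a}_i(\th_i',0)=\a_i(\th_i')$ and $\bar{\a}_i(\th_i',z)=\th_i'$ for $z\neq 0$. Extra revealing labels $z_i=2$ cover $\th_i\notin\a_i(\Th_i)$, a case your ``$\a_i^{-1}$-compatible'' relabeling does not address.

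On path every type plays $\s_i(\bar{\a}_i(t_i))$, so revealing types play the \emph{good} profile rather than copying the bad one. Conditional on $Z=0$, a default type faces exactly payoff type $\th_i'$'s problem in the BNE $\s\circ\a$. Conditional on $Z=1$, every type faces exactly payoff type $\bar{\a}_i(t_i)$'s problem in $\s$. The prescribed move is optimal in both problems, hence against the mixture (\cref{res:BNE_elaboration}). With that change, your own one-relabeled-player table already passes this test whenever every $\a_i$ is onto.

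If flagged types instead copy the bad on-path moves, as your Step~3 presupposes, the argument breaks. In the perturbation event the relabeled player plays $\s_i(\th_i)$ against $\s_{-i}(\a_{-i}(\th_{-i}))$, a hybrid profile covered by neither $\s$ nor $\s\circ\a$. Since Step~1 gives only weak optimality, even this $O(\e)$ event can destroy on-path optimality.

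\textbf{Off-path gaps.} First, with a single relabeled player and all others flagged, in a two-player game a flagged player knows its opponent is the relabeled default type. After that opponent deviates, the flagged player's beliefs about $\th_{-i}$ are governed by default-type trembles, and your ranking of trembles says nothing about them. With uniformly scaled trembles, payoff types that need more deviations to reach $h$ are discounted by extra powers of the tremble rate. So the good equilibrium's possibly nondegenerate off-path belief $\mu_i(\th_i,h)$ is not recovered. In the paper, $\marg_T(\pi\otimes\k)$ has full support, so any deviator $j$ can be attributed to a $z_j=1$ type whose trembles are exactly those supporting $\mu$. All $z\neq 1$ types tremble at a rate $\l^k$ that is negligible relative to every $\hat{L}^k_j(\th_j,h)$ (\cref{res:zero_prob}, \eqref{eq:mu_def}).

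Second, ``revert to the good continuation at the payoff type assigned probability one'' is undefined after a default type's own deviation when no opponent has deviated. There it still holds the mixed belief \eqref{eq:Bayes} over $\th_i'$ and $\a_i(\th_i')$, and opponents continue differently under $Z=0$ and $Z=1$. The paper re-optimizes at these histories (the set $D_i$) by backward induction. It then uses path-equivalence (\cref{res:no_deviation}) so that this modification does not disturb anyone else's inferences.
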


\begin{rem}[Integer games] \label{res:integer} We require that $\HH \cup \HH'$ is finite, while the mechanisms in \cite{BerginSen1998} and \cite{Baliga1999} ask players to select arbitrary integers at some histories. In the proof, we show that our conclusion still holds for any finite-horizon multi-stage mechanism $(\HH, \HH', g)$ with countably many moves, provided that it admits a sequential best response for each player at each history, for all beliefs and opposing pure strategies. This property holds for the mechanisms in \cite{BerginSen1998} and \cite{Baliga1999}. 
\end{rem}

To interpret the conclusion of \cref{res:fragile}, recall from the definition of convergence that for each $\th \in \Th$, we have $\lim_{\e \to 0} \pi^{\e} [ \th, t^\ast(\th)] = \pi (\th)$. Thus, the limiting ex-ante probability that the induced decision is outside $F$ is at least $\min_{\th' \in \Th} \pi (\th')$, which is strictly positive.

Here is an outline of the proof. Let $F$ be an SCC that violates Bayesian monotonicity, and suppose that the multi-stage mechanism $(\HH,\HH',g)$ sequentially implements $F$ under $\pi$. Then there exists a sequential equilibrium $(\hat{\s},\hat{\mu})$ of $(\HH,\HH',g)$ such that $g\circ\hat{\s}$ is a selection from $F$, and there exists a deception profile $\a$ such that $\hat{\s}\circ\a$ is a Bayes--Nash equilibrium of $(\HH,\HH',g)$ but $g\circ\hat{\s}\circ\a$ is not a selection from $F$.

For each $\e \in (0,1)$, define an information structure and an associated strategy profile such that, for every $\th \in \Th$, conditional on the payoff-type profile being $\th$, the players coordinate on the profile $\hat{\s}(\a(\th))$ with probability $1-\e$ and on the profile $\hat{\s}(\th)$ with probability $\e$. Each player's signal provides them with enough information to coordinate play in this way, but at least one player is sometimes uncertain of their payoff type and hence about which of the two profiles the players are coordinating on. We define the belief system so that if any player observes a deviation by an opponent, they infer that the players are coordinating on $\hat{\s}(\th)$ (though they may remain uncertain of the true payoff-type profile $\th$). We then derive sequential optimality and consistency from the corresponding properties of $(\hat{\s}, \hat{\mu})$ under $\pi$. 

\cref{res:fragile} is an incomplete-information analogue of the complete-information result of \citet[Theorem 3]{AghionEtAl2012}. The two  constructions have a common structure: The failure of Bayesian/Maskin monotonicity is used to construct a ``bad'' Bayes--Nash/Nash equilibrium. Then an information structure is defined in order to induce a mixture between the good and bad equilibria. To ensure sequential optimality, each player infers from a detectable opponent deviation that the players are coordinating on the good equilibrium. Despite these similarities,  the incomplete-information setting requires a different and more intricate construction.

First, the state realization can no longer serve as a coordination device. In order to coordinate play on the good and bad profiles, we enrich each player's signal space so that it is strictly larger than their payoff-type space. Each player's signal indicates how they should play without necessarily revealing their own payoff type. Second, even after a player observes a deviation by an opponent, they may remain uncertain about their opponents' payoff types. We construct a sequence of trembles that replicates the potentially nondegenerate off-path beliefs about opponents' payoff types that guarantee sequential optimality under the ``good'' equilibrium. Third, our construction is global in the sense that it depends on the full deception profile $\a$. By contrast, the failure of Maskin monotonicity yields a single ``deception'' for all players, which differs from the identity at a single state. Thus, \cite{AghionEtAl2012} reason locally about a fixed pair of states $\th'$ and $\th''$; for every other realization, the state becomes common knowledge.

With additional structure on the social choice environment, Bayesian monotonicity and incentive compatibility are jointly sufficient for BNE implementation. Given social choice functions $f$ and $f'$ and a subset $E$ of $\Th$, let $f_Ef'$ denote the social choice function that coincides with $f$ on $E$ and with $f'$ on $\Th \setminus E$. 

\begin{defn}[Economic environment] \label{def:ec} The social choice environment $(\Th, \pi, A, (u_i)_{i=1}^{n})$ is \emph{economic} if for each SCF $f$ and each payoff-type profile $\th \in \Th$, there exist distinct players $i$ and $j$ and constant SCFs $a$ and $b$ such that for all subsets $E$ of $\Th$ containing $\th$, we have $a_E f \succ_{\th_i} f$ and $b_E f \succ_{\th_j} f$. 
\end{defn}

\cref{def:ec} is a uniform non-satiation property, which usually holds in exchange economies. A social choice correspondence is incentive compatible if every selection from it is Bayesian incentive compatible in the standard sense. By \citet[Theorem 1, p.~467]{Jackson1991}, if $n \geq 3$ and the environment is economic, then Bayesian monotonicity and incentive compatibility are jointly sufficient for BNE implementability.\footnote{\cite{Jackson1991} requires an additional condition, termed closure, but this condition is already implied by our focus on social choice correspondences, and the associated set of selections, as opposed to general sets of social choice functions.} Incentive compatibility is also necessary for sequential implementability (and BNE implementability); otherwise, any selection that is not incentive compatible could never be induced by any BNE since some type would strictly prefer to mimic the strategy of another type. Thus, we obtain the following. 

\begin{cor}[Fragility of additional implementation]
Suppose that $n \geq 3$ and the social choice environment is economic. Let $F$ be a social choice correspondence that is not BNE implementable (under $\pi$). If $F$ is sequentially implemented (under $\pi$) by some multi-stage mechanism $(\HH, \HH', g)$, then the conclusion of \cref{res:fragile} holds.
\end{cor}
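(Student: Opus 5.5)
The plan is to reduce the corollary to \cref{res:fragile} by showing that, under the stated hypotheses, $F$ must fail Bayesian monotonicity. Once that is established, \cref{res:fragile} applies verbatim to the given mechanism $(\HH,\HH',g)$ and delivers the required family of converging information structures and sequential equilibria. The argument is by contraposition on Jackson's sufficiency theorem. Suppose, for contradiction, that $F$ is Bayesian monotone. I will show that $F$ is also incentive compatible. Then \citet[Theorem 1]{Jackson1991}, which applies because $n\ge 3$ and the environment is economic, makes $F$ BNE implementable, contradicting the hypothesis.

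The key intermediate step is that sequential implementability implies incentive compatibility. Fix any selection $f$ from $F$. By part 2 of sequential implementation there is a sequential equilibrium $(\s,\mu)$ with $g\circ\s=f$. I first check that $\s$ is a Bayes--Nash equilibrium of the normal form of the mechanism.
\begin{enumerate}
\item Apply sequential optimality at the initial history $\varnothing$.
\item By consistency, Bayes' rule at $\varnothing$ pins down $\mu_i(\th_i,\varnothing)$ as the prior conditional $\pi_{-i}(\cdot\mid\th_i)$, since $\pi$ has full support.
\item Hence $\s_i(\th_i)$ maximizes type $\th_i$'s interim expected utility against $\s_{-i}$ over all $m_i\in M_i$, which is exactly the BNE condition with message space $M_i$ and outcome map $m\mapsto g(m)$.
\end{enumerate}
Next, take any $\th_i,\th_i'$. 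Since $\s_i(\th_i')\in M_i$ is a feasible deviation for type $\th_i$, and $g(\s_i(\th_i'),\s_{-i}(\th_{-i}))=f(\th_i',\th_{-i})$, the BNE inequality gives $f\succsim_{\th_i} f_{i,\th_i'}$. This is Bayesian incentive compatibility of $f$. As $f$ was an arbitrary selection, $F$ is incentive compatible.

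It remains to justify the application of Jackson's theorem, and this is the step I expect to need the most care. Jackson states sufficiency for sets of SCFs satisfying closure, Bayesian monotonicity and incentive compatibility. The set of selections from a correspondence $F$ satisfies closure automatically. Given selections $f,f'$ and a common-knowledge event $E$, the patched SCF $f_Ef'$ is still pointwise in $F$, so it is again a selection. I will verify that our definitions of Bayesian monotonicity, the economic condition (\cref{def:ec}), and BNE implementation match Jackson's formulation with pure strategies. With that check, Jackson's theorem yields that $F$ is BNE implementable, which is the desired contradiction. Therefore $F$ is not Bayesian monotone, and \cref{res:fragile} gives the conclusion.
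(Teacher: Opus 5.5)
Your proposal is correct and follows the paper's own argument. Sequential implementability makes every selection the outcome of a sequential equilibrium, which is a BNE, so $F$ is incentive compatible. Then \citet[Theorem 1]{Jackson1991}, with closure automatic for selections of a correspondence, would make a Bayesian monotone $F$ BNE implementable, so $F$ must violate Bayesian monotonicity and \cref{res:fragile} applies. You additionally spell out a step the paper leaves implicit: consistency pins down beliefs at the initial history, which is what makes a sequential equilibrium a BNE.
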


\begin{rem}[Bayes--Nash implementation]
To highlight the contrast between sequential implementation and Bayes--Nash 
implementation, note that a standard upper hemicontinuity property of BNE gives the following result. Suppose that a social choice correspondence $F$ is implemented in BNE (under $\pi$) by some finite mechanism $(M,g)$. For any family of information structures, $(T, t^\ast, \pi^\e)_{\e \in (0,1)}$, converging to $\pi$ and any associated family $(\s^{\e})_{\e \in (0,1)}$ of pure BNE, there exists $\bar{\e} > 0$ such that $g(\s^{\e} (t^\ast (\th))) \in F(\th)$ for all $\e \in (0, \bar{\e})$ and all $\th \in \Th$.\footnote{Otherwise, since $T$ and $M$ are finite, there exists some payoff-type profile $\th \in \Th$, strategy profile $\s \colon T \to M$, and sequence $(\e^k)$ converging to $0$ such that for all $k$,  we have $\s^{\e^k} = \s$ and $g( \s( t^\ast (\th))) \notin F(\th)$.  Passing to the limit in the equilibrium conditions, it follows that the profile $\s'(\th) = \s ( t^\ast (\th))$ is a BNE of $(M,g)$ under $\pi$, contrary to BNE implementation.}
\end{rem}

\section{More restrictive perturbations} \label{sec:KPT}

To prove \cref{res:fragile}, we constructed information structures with two properties: (a) at least one player is sometimes uncertain of their own payoff type; and (b) conditional on the payoff-type profile, the players' types are not statistically independent. Here, we show that the general result fails if perturbations are required to satisfy ``known payoff types'' or conditional independence.

\subsection{Known payoff types}

An information structure $(T, t^\ast, \hat{\pi})$ has \emph{known payoff types (KPT)} if (a) for each player $i$, there exists a finite set $S_i$ such that $T_i \subseteq \Th_i \times S_i$, and (b) $\hat{\pi} (\th, (\th_i', s_i)_{i=1}^{n}) = 0$ for every $\th \in \Th$ and $(\th_i', s_i)_{i=1}^{n} \in T$ such that $\th \neq (\th_i')_{i=1}^{n}$. Intuitively, each player $i$'s type $t_i = (\th_i, s_i)$ reveals their true payoff type $\th_i$ and a supplementary signal $s_i \in S_i$.

 \begin{rem}[KPT and knowledge of payoffs]
The KPT condition should not be interpreted as requiring player $i$ to know their
entire realized payoff function. KPT requires that player $i$ observe their own
payoff-type component $\theta_i$. Because we allow interdependent values, player
$i$ may nevertheless remain uncertain about the utility consequences of a decision:
their utility $u_i(a,\theta)$ can depend on $\theta_{-i}$. A stronger known-payoffs property would require that all payoff-type profiles that player $i$ considers possible induce the same function $a\mapsto u_i(a,\theta)$.
\end{rem}

Formally, we show that \cref{res:fragile} fails if the perturbed information structures are required to satisfy KPT. 

\begin{prop}[KPT counterexample] \label{res:KPT_counterexample} There is a social choice environment $(\Th, \pi, A, (u_i)_{i=1}^{n})$ with independent private values, an SCF $f$ that is not Bayesian monotone, and a mechanism $(\HH, \HH', g)$ that sequentially implements $f$ under $\pi$ such that the following holds: There exists $\bar{\h} > 0$ such that for every KPT information structure $(T, t^\ast, \hat{\pi})$ satisfying $\| \hat{\pi} - \pi \otimes t^\ast \|_{\mathrm{TV}} < \bar{\h}$,  every sequential equilibrium $(\s, \mu)$ of $(\HH, \HH', g)$ satisfies $g ( \s ( t^\ast(\th))) = f(\th)$ for all $\th \in \Th$.
\end{prop}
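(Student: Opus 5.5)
Let me propose a plan: use the motivating example of \cref{sec:example} as the environment: two players, independent uniform binary payoff types, private values given by \eqref{eq:u_decomposition}, the SCF $f(\th) = x^{\th}$, and the three-stage report/challenge/defend mechanism. The proof has three parts. First, verify that $f$ is not Bayesian monotone. Take the deception $\a_i(\th_i) = 1-\th_i$ for both players, so that $f\circ\a \neq f$. Utilities satisfy $u_i(x^{\th'},\th_i) = 0$ for every no-challenge decision, so every payoff type is indifferent among all SCFs taking values in $\{x^{\th'}\}$. Now suppose some $f'$ and $\bar\th_i$ satisfy $f'\circ\a \succ_{\bar\th_i} f\circ\a$ and $f \succsim_{\th_i} f'_{i,\a_i(\bar\th_i)}$ for all $\th_i$. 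By independence, $f'_{i,\a_i(\bar\th_i)}$ evaluated at type $\th_i$ has the same expected utility as $f'$ evaluated with $\th_i$ and $\a_i(\bar\th_i)$ as the argument, with $\th_{-i}$ uniform. Since $\a_{-i}$ is a bijection, the strict inequality for type $\bar\th_i$ contradicts the weak inequality for $\th_i=\bar\th_i$. More generally, under private values and independence, $f'\circ\a$ evaluated at $\bar\th_i$ equals $f'_{i,\a_i(\bar\th_i)}$ evaluated at $\bar\th_i$. Hence the two conditions clash at $\th_i = \bar\th_i$. Second, \cref{sec:equilibria} already shows that the mechanism sequentially implements $f$.

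The main part is robustness under KPT perturbations. Fix a KPT structure with $\|\hat\pi - \pi\otimes t^\ast\|_{\mathrm{TV}} < \bar\h$ and a sequential equilibrium $(\s,\mu)$. The argument runs backwards through the stages.
\begin{enumerate}
\item \emph{Stage 3.} Each type $t_i=(\th_i,s_i)$ knows $\th_i$, and values are private. So a challenged type strictly prefers to defend if and only if its stage-1 report equalled $\th_i$, whatever its beliefs. Thus stage-3 play is pinned down for every type.
\item \emph{Stage 2, no challenge against truthful reports.} Suppose the reports include a truthful report by the opponent, in the sense that it came from a type whose payoff type equals the report. A challenge then yields $-1$ if the opponent defends and $+2$ otherwise. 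Challenging is optimal only if the challenger assigns probability at least $1/3$ to the opponent having misreported.
\item \emph{Stage 1, every type reports truthfully.} This is the key step. Suppose type $t_i=(\th_i,s_i)$ misreports $1-\th_i$. If the opponent then challenges, $t_i$ does not defend and gets at most $-3 + 2 = -1$. Truthful reporting, never challenging, and defending when challenged guarantees at least $0$. So a misreporting type must expect that the report $1-\th_i$ is not challenged with high enough probability.
\end{enumerate}

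Ruling out the scenario in stage 3 is the main obstacle, because with perturbed beliefs a challenger may think the report $1-\th_i$ probably came from a truthful type. My plan is to apply the argument only to the distinguished types $t_i^\ast(\th_i)$ and to exploit the evident $p$-belief structure. The distinguished types are $t^\ast_i(\th_i) = (\th_i, s_i^\ast(\th_i))$ with high posterior weight. Suppose $t_i^\ast(\th_i)$ misreports to $1-\th_i$. The distinguished opponent type $t^\ast_{-i}(\th_{-i})$ assigns posterior probability close to $1/2$ (uniform prior, small $\bar\h$) to $t_i = t_i^\ast(\th_i)$, and conditional on that, to a misreport. Whether $t^\ast_i(1-\th_i)$ also reports $1-\th_i$ matters only up to a factor of about $1/2$. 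So the distinguished opponent assigns probability at least roughly $1/2 - O(\bar\h) > 1/3$ to a misreport, and strictly prefers to challenge. Because $t^\ast_i(\th_i)$ puts probability near $1$ on the opponent being distinguished, its payoff from misreporting is at most about $-1 + O(\bar\h) < 0$, a contradiction.

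Hence all distinguished types report truthfully. Distinguished opponents then face truthful reports from distinguished types with high probability, so by stage 2 they do not challenge. One detail needs care: the opponent must attribute the on-path truthful report mostly to a truthful type, which again follows from the TV bound. The induced outcome at $t^\ast(\th)$ is therefore $x^{\th} = f(\th)$. Choosing $\bar\h$ small enough that all these $O(\bar\h)$ errors stay below the fixed payoff gaps $1/3$ and $1$ completes the proof.
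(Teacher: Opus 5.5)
Your proposal is correct and follows essentially the same route as the paper. The non-Bayesian-monotonicity argument is the same: a bijective deception under the uniform prior, combined with indifference over all no-challenge decisions. The robustness argument is also the same: you restrict to distinguished types and use the TV bound in two places. First, a misreport by $t_i^\ast(\th_i)$ draws a challenge from every distinguished opponent, since their posterior on a misreport exceeds $1/3$. Second, once distinguished types are truthful, a distinguished opponent assigns probability above $2/3$ to a truthful report and so does not challenge.

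The paper just makes your $O(\bar{\h})$ bounds explicit with $\bar{\h} = 1/20$, for example $(1/4-\h)/(1/2+\h) > 1/3$ and $(1/4-\h)/(1/4+\h) > 2/3$. One small correction: your ``more generally'' identity between $f'\circ\a$ and $f'_{i,\a_i(\bar{\th}_i)}$ at $\bar{\th}_i$ requires $\a_{-i}$ to preserve the distribution of $\th_{-i}$. Private values and independence alone do not give it.
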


The final conclusion of \cref{res:KPT_counterexample} implies that
\[
    \hat{\pi} \Set{ (\th, t) \in \Th \times T: g(\s(t)) \neq f(\th) } \leq \| \hat{\pi} - \pi \otimes t^\ast \|_{\mathrm{TV}},
\]
so in the example, the error probability on the left side converges to $0$ for any family of KPT information structures converging to $\pi$. 

To prove \cref{res:KPT_counterexample},  we return to the example from \cref{sec:example}. Let $(\HH, \HH', g)$ denote the mechanism and $f$ the associated SCF from \cref{sec:example}. In the perturbations from that example, it is critical that each player is sometimes uncertain of their own payoff type because then, after observing an unexpected challenge by their opponent, the player infers that the opponent's signal reveals that the player's payoff type matches the report they submitted. If players know their own payoff types, this kind of inference is impossible. 


Since \cite{BergemannMorris2005}, it has become standard to maintain the known-payoff-type property when varying the information structure. Our results identify one setting in which this restriction changes the conclusion.

\subsection{Conditional independence}

An information structure $(T, t^\ast, \hat{\pi})$ is \emph{conditionally independent (CI)} if under $\hat{\pi}$, the types $t_1, \ldots, t_n$ are mutually independent, conditional on $\th$. This independence property implies that $\hat{\pi} ( t_{-i} | \th, t_i) = \hat{\pi} (t_{-i} | \th)$ whenever $\hat{\pi} ( \th, t_i) > 0$. This is a natural generalization of \citeapos{FKL1988} independent-types condition to our setting. Next, we prove the analogue of \cref{res:KPT_counterexample} for CI perturbations in place of KPT perturbations.

\begin{prop}[CI counterexample] \label{res:CI_counterexample}  There is a social choice environment $(\Th, \pi, A, (u_i)_{i=1}^{n})$ with independent private values, an SCF $f$ that is not Bayesian monotone, and a mechanism $(\HH, \HH', g)$ that sequentially implements $f$ under $\pi$ such that the following holds:  There exists $\bar{\h} > 0$ such that for every CI information structure $(T, t^\ast, \hat{\pi})$ satisfying $\| \hat{\pi} - \pi \otimes t^\ast \|_{\mathrm{TV}} < \bar{\h}$,  every sequential equilibrium $(\s, \mu)$ of $(\HH, \HH', g)$ satisfies $g ( \s ( t^\ast(\th))) = f(\th)$ for all $\th \in \Th$.
\end{prop}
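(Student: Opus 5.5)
The plan is to use a two-player example of the same report--challenge--defend form as in \cref{sec:example}, possibly with modified payoffs. I would then prove robustness by backward induction, with conditional independence replacing the argument that in the perturbation of \cref{sec:example} let a challenged player learn their own payoff type from the challenge. The baseline facts come first. As in \cref{sec:equilibria}, every pure sequential equilibrium under $\pi$ is truthful with no challenges, so the mechanism sequentially implements $f(\th)=x^{\th}$. I would check directly from the definition that $f$ is not Bayesian monotone. Take the deception $\a_i(\th_i)=1-\th_i$ for both players. A whistle-blowing SCF $f'$ must satisfy $f\succsim_{\th_i} f'_{i,\a_i(\bar\th_i)}$ for every $\th_i$ of player $i$. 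Because values are private and $f$ is constant across $\th_i$ in the relevant sense, the same inequality then forbids $f'\circ\a\succ_{\bar\th_i} f\circ\a$.

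For robustness, fix a CI structure with $\|\hat\pi-\pi\otimes t^\ast\|_{\mathrm{TV}}<\bar\h$ and a sequential equilibrium $(\s,\mu)$. First I would record a uniform estimate. Let $p_{\min}=\min_{\th}\pi(\th)$ and $q=\min_i\min_{\th_i}\pi_i(\th_i)$. Then for each distinguished type, $\hat\pi[\th_i\mid t_i^\ast(\th_i)]\ge 1-c\bar\h$ with $c=2/q$, and
\[
\hat\pi\bigl[t_{-i}^\ast(\th_{-i})\mid \th\bigr]\ge 1-\bar\h/p_{\min}.
\]
The second bound holds for every $\th$, and CI makes it also hold conditional on $t_i$. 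Stage~3 is pinned down by the posterior probability that the challenged player's report equals their own payoff type. Stage~2 is pinned down by the posterior probability that the opponent will defend.

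The argument then runs in two steps. First, on the equilibrium path, a type $t_j^\ast(\th_j)$ who sees an opponent report $\th_i'$ assigns most of its belief to distinguished opponent types. By CI this holds whatever $t_j$ is, since $t_i$ is drawn given $\th$ independently of $t_j$. Hence if $\s_i(t_i^\ast(\th_i))$ is a misreport, player $j$ challenges, and that type earns at most about $-1$ instead of the $0$ it can secure. So every distinguished type reports truthfully. Second, given truthful reports by distinguished types, the same estimate shows that distinguished types do not challenge. The outcome at $t^\ast(\th)$ is therefore $x^{\th}$.

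The step I expect to be the main obstacle is the off-path stage-3 belief of a challenged distinguished type. An unexpected challenge can be attributed via trembles to a non-distinguished opponent type whose likelihood under $\th$ is concentrated on $\th_i\neq$ the report. CI does not by itself bound this likelihood ratio. So I would first try to make stage~3 irrelevant on the path of distinguished types: in step one I only need a lower bound on the payoff from truthful reporting, which holds because defending after a truthful report is optimal whenever the belief on one's own type is at least $1/2$, and that bound must survive any off-path belief. If trembles can indeed destroy this, I would modify the example so that the challenge decision is made before the challenger observes the report, or so that it depends only on the challenger's own payoff type, so that no off-path opponent move ever informs a player about their own $\th_i$. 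Under CI, such a player's belief about $\th_i$ at every history then equals $\hat\pi[\cdot\mid t_i]$ updated only through $\th_{-i}$.
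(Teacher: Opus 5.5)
There is a genuine gap, and it is fatal: the mechanism and payoffs of \cref{sec:example} are \emph{not} robust to CI perturbations, so no argument along your lines can work for them. The failing step is ``if $\s_i(t_i^\ast(\th_i))$ is a misreport, player $j$ challenges.'' Player $j$ challenges only if the misreporter will not defend. CI restricts how $t_{-i}$ depends on $t_i$ given $\th$, but not how $t_{-i}$ depends on $\th_i$. Concretely, draw types independently given $\th$: $t_i=t_i^\ast(\th_i)$ with probability $1-2\e$, $t_i=t_i^\ast(1-\th_i)$ with probability $\e$, and $t_i=s_i^{\th_{-i}}$ with probability $\e$, where $s_i^{b}$ is a new type that reveals $\th_{-i}=b$ and nothing about $\th_i$. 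This structure is CI, has $\Th$-marginal $\pi$, and is within $4\e$ of $\pi\otimes t^\ast$ in total variation. Let distinguished types misreport, let no type ever challenge, and let every challenged player defend. For consistency, let type $s_{-i}^{r}$ tremble in its stage-1 report with probability $\d$, and tremble to challenging with probability $\d$ at every stage-2 history where player $i$ reported $r$; set all other trembles to probability $\d^{10}$. Every type of player $i$ puts positive prior weight on $\th_i=r$, hence on $s_{-i}^{r}$. So after any challenge, the limit belief is concentrated on $\th_i=r$, the player's own report. Defending is then optimal, so challenging (worth $-1$) is never optimal, so reports are payoff-irrelevant. This is a sequential equilibrium with outcome $x^{(1-\th_1,1-\th_2)}$ at $t^\ast(\th)$. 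The real problem is not a lower bound on the truthful payoff but the missing upper bound on the misreporter's payoff.

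Your fallback does not repair this. Changing when the challenge is made, or what the challenger observes, cannot prevent an off-path move from being blamed on an opponent type that is informative about $\th_i$. Your closing claim that beliefs about $\th_i$ are updated only through $\th_{-i}$ is false under CI. For example, with challenges made simultaneously with reports, pooling on report $1$ with defend-always survives in the same perturbation. The missing idea, which the paper uses, has two parts: make the deception costly in the \emph{no-challenge} payoffs, precisely in the payoff-type state needed to rationalize the defense; and argue at stage~1 rather than by backward induction. In the paper's variant, only player~$2$ can challenge, only after $\th_1'=1$, and payoffs depend on reports. By upper hemicontinuity, the distinguished types' play is a baseline BNE, which by \cref{res:BNE} induces $f$ or $f\circ\a$. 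In the latter case, $t_1^\ast(0)$ must defend after reports $(1,0)$ and a challenge, so consistency forces $\hat\pi(\th_1=1\mid t_1^\ast(0))>0$. Conditional on $\th_1=1$, CI makes player~$2$ distinguished (reporting $0$, not challenging) with probability above $2/3$ whatever $t_1$ is. The unchallengeable report $0$ therefore gains $1$ there against a worst-case loss of $2$, and weakly gains when $\th_1=0$, which gives a strictly profitable deviation. Your estimate $\hat\pi[t_{-i}^\ast(\th_{-i})\mid\th,t_i]\ge 1-\bar\h/p_{\min}$ is the right CI ingredient, but it must feed this stage-1 comparison, which needs payoffs in which reports matter absent a challenge.
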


To prove \cref{res:CI_counterexample}, we  consider a variant of the example from \cref{sec:example}. The mechanism still has the report--challenge--defend structure, but now only player $2$ can challenge, and only after a particular report by player $1$. If there is no challenge, the payoffs now depend on the reports, unlike the example from \cref{sec:example}. Under CI information structures, players can be uncertain of their own payoff types. Thus, an unexpected challenge can cause a player to update their beliefs about their own payoff type. Conditional independence implies that a player's signal does not change the conditional distribution of their opponent's signal given the payoff-type profile. In the example, the ``bad'' BNE outcome cannot be sustained as a sequential equilibrium because the payoff-type beliefs that can justify defending after a challenge in turn constrain beliefs about the opponent's signal in a way that makes a stage-1 reporting deviation profitable.

\section{Discussion} \label{sec:discussion}
Theorem \ref{res:fragile} shows that the implementation power of sequential equilibrium is not robust to  small changes in the players'
information. The construction preserves both the utility functions and the
distribution of payoff-type profiles, but it allows  players to become
uncertain about their own payoff types. If perturbations are required to
preserve each player's knowledge of their  own payoff type, then no general fragility
result of the form in Theorem~1 is possible. In environments with interdependent
values, this restriction should be understood as preserving knowledge of a player's
own payoff-type component, rather than as requiring knowledge of their realized
payoff function.

The theorem also permits correlation across players' signals. In our construction,
the common latent variable coordinates the revealing signals and supports the required
off-path inferences. The analogous fragility result does not hold if,
conditional on the payoff-type profile, players' types are required
to be  conditionally independent.

Our analysis of the restrictions provided by sequential equilibrium leaves open whether the restrictions of other  equilibrium refinements such as undominated Nash equilibrium, strategic stability \citep{kohlberg1986strategic}, or strict Nash equilibrium  (which implies all these other refinements)  are robust to the perturbations we considered. If so, it would be interesting to see if they are also robust to slightly expanded notions of nearby games, such as replacing  the condition $\marg_{\Th} \pi^{\e} = \pi$ 
in Definition \ref{def:conv} with its limiting version  $\lim_{\e \to 0}\marg_{\Th} \pi^{\e} = \pi$ as suggested by \citet{FrickRomm2015}.

\appendix

\section{Proofs} \label{sec:proof_lemmas}

\subsection{Proof of \cref{res:fragile}} \label{sec:main_proof}

Fix a social choice correspondence $F$ that is not Bayesian monotone and a multi-stage mechanism $(\HH, \HH', g)$ that sequentially implements $F$ (under $\pi$). 

The proof proceeds in three steps. First, we obtain a suitable profile of deceptions. Second, we use this deception profile to define the family of information structures. Third, we construct a sequential equilibrium under each information structure with the desired properties. 

\paragraph{Obtaining the deception profile}
 Since $F$ is not Bayesian monotone, there exists a selection $f$ from $F$ and a profile $\a = (\a_i)_{i=1}^{n}$ of deceptions such that (i) $f \circ \a$ is not a selection from $F$ and  (ii) for each player $i$ and type $\bar{\th}_i$ and each $f' \colon \Th \to A$, if $f \succsim_{\th_i} f'_{i, \a_i(\bar{\th}_i)}$ for all types $\th_i$ in $\Th_i$, then $f \circ \a \succsim_{\bar{\th}_i} f' \circ \a$. Note that $\a_i$ is not the identity for at least one player $i$. Since $f$ is a selection from $F$ and $(\HH, \HH', g)$ sequentially implements $F$, there exists a sequential equilibrium $(\hat{\s}, \hat{\mu})$ such that $g \circ \hat{\s} = f$. 

\begin{lem}[Deceptive BNE] \label{res:BNE_deception} Under $\pi$, the profile $\hat{\s} \circ \a = (\hat{\s}_i \circ \a_i)_{i=1}^{n}$ is a Bayes--Nash equilibrium of $(\HH,\HH', g)$. 
\end{lem}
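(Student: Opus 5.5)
The plan is the standard necessity argument from \citet{Jackson1991}, applied to the normal form of the multi-stage mechanism: a profitable deviation from $\hat{\s} \circ \a$ will yield an SCF $f'$ that contradicts property (ii) of the deception profile $\a$. First I will check that $\hat{\s}$ is itself a BNE of $(\HH,\HH',g)$ under $\pi$. At the initial history $\varnothing$ no one has moved yet, so for every totally mixed profile $\s^k$ Bayes' rule gives $\mu_i^k(\th_i,\varnothing) = \pi_{-i}(\cdot \mid \th_i)$. By consistency, $\hat{\mu}_i(\th_i,\varnothing) = \pi_{-i}(\cdot\mid\th_i)$ as well. Sequential optimality at $\varnothing$ then says exactly that for each $i$, each $\th_i$, and each $m_i \in M_i$,
\[
  f = g\circ\hat{\s} \succsim_{\th_i} \bigl[\th \mapsto g(m_i, \hat{\s}_{-i}(\th_{-i}))\bigr].
\]

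Next, suppose for contradiction that $\hat{\s}\circ\a$ is not a BNE. Then there are a player $i$, a type $\bar{\th}_i$, and a full contingent plan $m_i \in M_i$ such that
\[
  \sum_{\th_{-i}} u_i\bigl(g(m_i, \hat{\s}_{-i}(\a_{-i}(\th_{-i}))), (\bar{\th}_i,\th_{-i})\bigr)\pi_{-i}(\th_{-i}\mid\bar{\th}_i)
\]
strictly exceeds the same sum with $\hat{\s}_i(\a_i(\bar{\th}_i))$ in place of $m_i$. Define $f'\colon\Th\to A$ by $f'(\th) = g(m_i, \hat{\s}_{-i}(\th_{-i}))$. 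This $f'$ does not depend on $\th_i$, so $f'_{i,\a_i(\bar{\th}_i)} = f'$.

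By the first step, $f \succsim_{\th_i} f'_{i,\a_i(\bar{\th}_i)}$ for every $\th_i \in \Th_i$. Property (ii) of $\a$ therefore gives $f\circ\a \succsim_{\bar{\th}_i} f'\circ\a$. Now unwind the definitions:
\begin{itemize}
  \item $(f\circ\a)(\bar{\th}_i,\th_{-i}) = g\bigl(\hat{\s}_i(\a_i(\bar{\th}_i)), \hat{\s}_{-i}(\a_{-i}(\th_{-i}))\bigr)$, which is the outcome when type $\bar{\th}_i$ follows $\hat{\s}\circ\a$;
  \item $(f'\circ\a)(\bar{\th}_i,\th_{-i}) = g\bigl(m_i, \hat{\s}_{-i}(\a_{-i}(\th_{-i}))\bigr)$, which is the outcome of the deviation.
\end{itemize}
The relation $\succsim_{\bar{\th}_i}$ evaluates these outcomes with the true profile $(\bar{\th}_i,\th_{-i})$ in $u_i$ and with weights $\pi_{-i}(\cdot\mid\bar{\th}_i)$. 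This is exactly the interim payoff comparison in the BNE condition, so the deviation is not profitable, a contradiction.

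I expect no serious obstacle. The only points needing care are two. First, the pinning down of root beliefs through consistency, which is what makes sequential optimality at $\varnothing$ equivalent to interim optimality in the normal form. Second, noticing that $f'$ is constant in $\th_i$, so the "whistle-blowing" function $f'_{i,\a_i(\bar{\th}_i)}$ in the definition of Bayesian monotonicity coincides with $f'$. Nothing about off-path behavior is needed, since the BNE condition involves only complete contingent plans $m_i \in M_i$ evaluated from the initial history.
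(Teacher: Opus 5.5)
Your proposal is correct and matches the paper's proof, which is the same argument from Jackson (1991): take the constant deviation $m_i$, set $f' = g\circ(\s_i',\hat{\s}_{-i})$, observe that $f' = f'_{i,\a_i(\th_i)}$ because $f'$ does not depend on player $i$'s type, and apply property (ii) of the deception profile $\a$. The only differences are presentational: you argue by contradiction, and you check explicitly, using Bayes' rule and consistency at the root history, that $\hat{\s}$ is a BNE, which the paper simply asserts.
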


\begin{proof} We follow the proof in \cite{Jackson1991}. Fix player $i$, type $\th_i \in \Th_i$, and a deviation $m_i \in M_i$. It suffices to show that type $\th_i$ cannot profit by deviating to $m_i$. Define $\s_i' \colon \Th_i \to M_i$ by $\s_i'( \th_i') = m_i$ for all $\th_i'$. Define $f' \colon \Th \to A$ by $f' = g \circ (\s_i', \hat{\s}_{-i})$. By construction, $f = g \circ \hat{\s}$. Since $\hat{\s}$ is a BNE, we conclude that
\[
    f = g \circ \hat{\s} \succsim_{\th_i'} g \circ (\s_i', \hat{\s}_{-i}) = f' = f'_{i, \a_i (\th_i)},
\]
for all types $\th_i'$. Hence, 
\[
    g \circ (\hat{\s} \circ \a) = f \circ \a \succsim_{\th_i} f' \circ \a = g \circ ( \s_i', \hat{\s}_{-i} \circ \a_{-i}). \qedhere
\]
\end{proof}

\paragraph{Constructing the information structures}

For each player $i$, let 
\begin{equation} \label{eq:S_i_def}
    T_i = \Set{ (t_i^{\th_i}, 0): \th_i \in \Th_i } \cup  \Set{ (t_i^{\th_i}, 1): \th_i \in \Th_i } \cup \Set{ (t_i^{\th_i}, 2): \th_i \in \Th_i \setminus \a_i (\Th_i) }.
\end{equation}
Let $T = \prod_{i=1}^{n} T_i$.  Define $t_i^\ast \colon \Th_i \to T_i$ by $t_i^\ast(\th_i) = (t_i^{\th_i}, 0)$. 

For each player $i$, define $\bar{\a}_i \colon T_i \to \Th_i$ by 
\begin{equation} \label{eq:abar}
    \bar{\a}_i ( t_{i}^{\th_i'}, z_i) = 
    \begin{cases}
        \a_i (\th_i') &\text{if}~z_i =0, \\
        \th_i' &\text{if}~z_i \neq 0.
    \end{cases}
\end{equation}
We construct a joint distribution over $\Th \times T$ that depends on the realization of a latent Bernoulli variable $Z$ with the following properties. If $Z = 0$, then $\bar{\a}_i (t_i) = \a_i(\th_i)$. If $Z = 1$, then $\bar{\a}_i(t_i) = \th_i$. And every type profile is consistent with $Z = 1$. 

Now we turn to the formal construction. First, we introduce auxiliary kernels. For each $z \in \{0,1\}$ and each player $i$, define $d_i^z \colon \Th_i \to \D(T_i)$ by letting $d_i^z (\th_i)$ equal the point mass on $(t_i^{\th_i}, z)$. Define the product kernel $d^z \colon \Th \to \D(T)$ by $d^z(\th) = \otimes_{i} d_i^z (\th_i)$. Next, define $\k_i \colon \Th_i \to \D(T_i)$ as follows. For $\th_i \in \a_i (\Th_i)$, let
\[
    \k_i( t_i^{\th_i'}, z_i | \th_i) 
    =
     \begin{cases} 
     1/2 &\text{if}~(\th_i', z_i) = (\th_i, 1), \\
     \frac{1}{2 | \a_i^{-1} (\th_i)|} &\text{if}~ \th_i' \in \a_i^{-1} (\th_i)~\text{and}~ z_i = 0, \\
     0 &\text{otherwise}. 
\end{cases}
\]
For $\th_i \in \Th_i \setminus \a_i (\Th_i)$, let
\[
    \k_i( t_i^{\th_i'}, z_i | \th_i) 
    =
     \begin{cases} 
     1/2 &\text{if}~(\th_i', z_i) = (\th_i, 1), \\
     1/2 &\text{if}~ (\th_i', z_i) = (\th_i, 2), \\
     0 &\text{otherwise}. 
\end{cases}
\]
Define the product kernel $\k \colon \Th \to \D(T)$ by $\k (\th) = \otimes_i \k_i (\th_i)$. 

For each $\e \in (0,1)$, define $\pi^{\e} \in \D (\Th \times T)$ by
\[
    \pi^{\e} = (1 - \e) (\pi \otimes d^0) + \e (\pi \otimes \k), 
\]
where $\otimes$ denotes the product of a measure and a kernel.\footnote{That is, $(\pi \otimes \k)[\th, t] = \pi (\th) \k (t | \th)$ for all $\th \in \Th$ and $t \in T$.} We include $z_i=2$ in order to keep the probability of $z_i = 1$ constant across payoff types. This is not essential, but it simplifies the construction of trembles below. 

\begin{figure}
\centering
\begin{tikzpicture}[
    x=0.82cm,
    y=0.82cm,
    every node/.style={font=\small},
    neutral/.style={black, line width=0.8pt},
    blue branch/.style={Blue, line width=0.9pt},
    orange branch/.style={Orange, line width=0.9pt},
    relation/.style={line width=0.75pt},
    blue relation/.style={relation, Blue},
    orange relation/.style={relation, Orange},
    root point/.style={circle, draw=black, fill=white, line width=0.8pt, inner sep=1.45pt},
    blue point/.style={circle, draw=Blue, fill=Blue, inner sep=1.45pt},
    orange point/.style={circle, draw=Orange, fill=Orange, inner sep=1.45pt},
    ambiguous point/.style={circle, draw=gray!75, fill=gray!75, inner sep=1.45pt}
]

\coordinate (root) at (5.25,7.85);
\coordinate (Zzero) at (2.75,4.70);
\coordinate (Zone) at (7.50,4.70);

\draw[blue branch] (root) -- node[pos=.50, above left=1pt, text=black] {$1-\varepsilon$} (Zzero);
\draw[orange branch] (root) -- node[pos=.50, above right=1pt, text=black] {$\varepsilon$} (Zone);
\node[root point] at (root) {};
\node[blue point] at (Zzero) {};
\node[orange point] at (Zone) {};
\node[left=6pt] at (Zzero) {$Z=0$};
\node[right=6pt] at (Zone) {$Z=1$};

\shade[top color=Orange!5, bottom color=Orange!45, opacity=.33]
    (Zone) -- (0,0) -- (10.5,0) -- cycle;

\shade[top color=Blue!5, bottom color=Blue!45, opacity=.40]
    (Zzero) -- (0,0) -- (4,0) -- cycle;

\draw[neutral] (0,0) -- (10.5,0);
\foreach \x in {0,4,8,10.5}
    \draw[neutral] (\x,-0.16) -- (\x,0.16);

\node[left=5pt] at (0,0) {$T_i$};
\node[ambiguous point] (s0) at (2,0) {};
\node[orange point] (s1) at (6,0) {};
\node[orange point] (s2) at (9.25,0) {};

\node[above=4pt] at (s0) {$\bigl(t_i^{\theta_i'},0\bigr)$};
\node[above=4pt] at (s1) {$\bigl(t_i^{\theta_i'},1\bigr)$};
\node[above=4pt] at (s2) {$\bigl(t_i^{\theta_i'},2\bigr)$};

\coordinate (theta0) at (1.00,-2.40);
\coordinate (alpha0) at (3.00,-2.40);
\coordinate (theta1) at (6,-2.40);
\coordinate (theta2) at (9.25,-2.40);

\draw[blue relation] (s0) -- (theta0);
\draw[orange relation] (s0) -- (alpha0);
\draw[orange relation] (s1) -- (theta1);
\draw[orange relation] (s2) -- (theta2);

\node[below=2pt] at (theta0) {$\theta_i'$};
\node[below=2pt] at (alpha0) {$\alpha_i(\theta_i')$};
\node[below=2pt] at (theta1) {$\theta_i'$};
\node[below=2pt] at (theta2) {$\theta_i'$};

\end{tikzpicture}

\caption{Visualization of the distribution of $(Z,t_i, \th_i)$ under $\pi^{\e}$}
\label{fig:info_structure}
\end{figure}

\begin{rem}[Interpretation of information structure] \cref{fig:info_structure} shows the stochastic mapping from $Z$ to $T_i$ and then the deterministic mapping from $Z \times T_i$ to $\Th_i$. Here, the partition of $T_i$ in \eqref{eq:S_i_def} is indicated by the three line segments.\footnote{The third set is nonempty if and only if $\a_i (\Th_i) \neq \Th_i$.} Call a type $(t_i^{\th_i}, z_i)$ \emph{ambiguous} if $z_i = 0$ (shown as the first segment) and \emph{revealing}  if $z_i \neq 0$ (shown as the last two segments). Draw a latent Bernoulli random variable $Z$. With probability $1 - \e$, we have $Z = 0$. In this case, $(\th, t)$ is drawn from $\pi \otimes d^0$, so each player's type is ambiguous. With probability $\e$, we have $Z = 1$. In this case, $(\th, t)$ is drawn from $\pi \otimes \k$, so each player's type could be ambiguous or revealing. In fact, the marginal $\marg_{T} (\pi \otimes \k)$ has full support on $T$, so every type profile is consistent with $Z = 1$. Thus, for small $\e$, the ambiguous types make $Z = 0$ likely but not certain. The revealing types reveal that $Z = 1$.
\end{rem}

The family $(T, t^\ast, \pi^{\e})_{\e \in (0,1)}$ converges to the original setting $\pi$. Clearly, $\marg_{\Th} \pi^{\e} = \pi$ for each $\e$. And by the definition of $t^\ast$, we have $\pi \otimes d^0 = \pi \otimes t^\ast$, so $\| \pi^{\e} - \pi \otimes t^\ast \|_{\mathrm{TV}} \leq \e$.

\paragraph{Constructing the sequential equilibrium}

Recall the definition of $\bar{\a}$ from \eqref{eq:abar}. Conditional on $Z = 0$, we have $\hat{\s}(\bar{\a}(t)) = \hat{\s} (\a ( \th))$. Conditional on $Z = 1$, we have $\hat{\s}(\bar{\a}(t)) = \hat{\s} (\th)$. 

\begin{lem}[BNE under perturbation] \label{res:BNE_elaboration} Under $\pi^\e$, the profile $\hat{\s} \circ \bar{\a} = (\hat{\s}_i \circ \bar{\a}_i)_{i=1}^{n}$  is a Bayes--Nash equilibrium of $(\HH, \HH', g)$. 
\end{lem}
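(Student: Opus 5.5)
The plan is to fix a player $i$, a type $t_i \in T_i$, and an arbitrary deviation $m_i \in M_i$. I will show that, under the posterior $\pi^{\e}(\cdot \mid t_i)$ over $(Z, \th, t_{-i})$, playing $\hat{\s}_i(\bar{\a}_i(t_i))$ yields at least the expected utility of $m_i$ when opponents follow $\hat{\s}_{-i}\circ\bar{\a}_{-i}$. The approach is to split the posterior according to the latent variable $Z$. I will check that, within each value of $Z$ and each value of $\th_i$ consistent with $t_i$, the prescribed move is a best response. Since the interim payoff is a convex combination of these conditional payoffs with weights independent of $m_i$, optimality then follows.

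\textbf{Key structural fact.} Under both $\pi\otimes d^0$ and $\pi \otimes \k$, the kernel is a product across players, with player $j$'s component depending only on $\th_j$. Hence, conditional on $Z$ and $\th$, the type $t_i$ is independent of $t_{-i}$. Moreover, conditional on $(Z,\th_i,t_i)$, the payoff-type profile $\th_{-i}$ is distributed as $\pi(\cdot\mid\th_i)$. The map $t_{-i}\mapsto \bar{\a}_{-i}(t_{-i})$ equals $\a_{-i}(\th_{-i})$ almost surely when $Z=0$. When $Z=1$ it equals $\th_{-i}$ almost surely: for $z_j\neq 0$ this holds by definition, and for $z_j=0$ the support of $\k_j$ forces $\th_j=\a_j(\th_j')=\bar{\a}_j(t_j)$. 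I will verify these support claims directly from the definitions of $d^0$ and $\k$.

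\textbf{Case analysis.} The first case is a revealing type $t_i=(t_i^{\th_i'},z_i)$ with $z_i\neq0$. This type arises only when $Z=1$ and $\th_i=\th_i'$. The opponents then play $\hat{\s}_{-i}(\th_{-i})$ with $\th_{-i}\sim\pi(\cdot\mid\th_i')$, and $i$ is prescribed $\hat{\s}_i(\th_i')$. This is optimal because $\hat{\s}$, being a sequential equilibrium under $\pi$, is in particular a BNE: at the root, consistency forces $\hat{\mu}_i(\th_i,\varnothing)=\pi(\cdot\mid\th_i)$, so sequential optimality at $\varnothing$ is exactly BNE optimality.

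The second case is an ambiguous type $t_i=(t_i^{\th_i'},0)$, which has two components. On the $Z=0$ component, $\th_i=\th_i'$, opponents play $\hat{\s}_{-i}(\a_{-i}(\th_{-i}))$, and $i$ plays $\hat{\s}_i(\a_i(\th_i'))$. This is optimal for type $\th_i'$ by \cref{res:BNE_deception}. On the $Z=1$ component, $\th_i=\a_i(\th_i')$, opponents play $\hat{\s}_{-i}(\th_{-i})$ with $\th_{-i}\sim\pi(\cdot\mid \a_i(\th_i'))$, and $i$ plays $\hat{\s}_i(\a_i(\th_i'))=\hat{\s}_i(\th_i)$. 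This is optimal again because $\hat{\s}$ is a BNE. Since $m_i$ is weakly worse on each component, it is weakly worse in the mixture.

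\textbf{Main obstacle.} The step needing the most care is the $Z=1$ bookkeeping for ambiguous types: confirming that the posterior over $\th_{-i}$ on that component is exactly $\pi(\cdot\mid\a_i(\th_i'))$, with no distortion from the factor $1/(2|\a_i^{-1}(\th_i)|)$. That factor depends only on $\th_i$, so it cancels in the conditional. The other point to confirm is that every opponent's $\bar{\a}_j(t_j)$ coincides with $\th_j$ there. I would write out $\pi^{\e}(\th,t_{-i},Z=1\mid t_i)$ explicitly, using the product form, to settle this.
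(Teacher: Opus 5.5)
Your proposal is correct and follows essentially the same route as the paper: condition on the latent $Z$, use that $\hat{\s}$ is a BNE under $\pi$ for the $Z=1$ component (where $\bar{\a}(t)=\th$ and $\th_{-i}\sim\pi(\cdot\mid\bar{\a}_i(t_i))$), and use \cref{res:BNE_deception} for the $Z=0$ component, which arises only for ambiguous types. Your remark that consistency pins down $\hat{\mu}_i(\th_i,\varnothing)=\pi(\cdot\mid\th_i)$, so that sequential optimality at the root gives BNE optimality of $\hat{\s}$, makes explicit a step the paper leaves implicit.
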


\begin{proof}
Fix player $i$ and a type $t_i = (t_i^{\th_i'}, z_i) \in T_i$. To show that player $i$ does not have a profitable deviation, it suffices to show that player $i$ does not have a profitable deviation conditional on each realization of $Z$.\footnote{Here and below, when computing probabilities involving $Z$, we think of $\pi^{\e}$ as being extended to $\D( \Th \times T \times \{0,1\})$ as $(1 - \e) (\pi \otimes d^0) \otimes \d_0 + \e (\pi \otimes \k) \otimes \d_1$. The symbol $\otimes$ is overloaded: it can denote the product between two measures or between a measure and a kernel.} We use the notation $[\cdot]$ to denote the indicator function for the predicate it encloses.

\begin{itemize}
    \item We have $\pi^{\e} [Z = 1| t_i] > 0$ and 
\[
     \pi^{\e} ( \th | t_i, Z = 1) = \pi (\th_{-i}| \bar{\a}_i(t_i)) [ \th_i = \bar{\a}_i (t_i)].
 \]
In this case, player $i$ is certain that the players are following $\hat{\s} (\th)$, so they have no profitable deviation since $\hat{\s}$ is a BNE (under $\pi$).

\item We have $\pi^{\e} [Z = 0| t_i] > 0$ if and only if $z_i = 0$. If $z_i = 0$, then 
\[
    \pi^{\e} ( \th | t_i, Z = 0) = \pi (\th_{-i}| \th_i')[\th_i = \th_i'].
\]
In this case, player $i$ is certain that the players are following $\hat{\s} (\a(\th))$, so they have no profitable deviation since $\hat{\s} \circ \a$ is a BNE (under $\pi$) by \cref{res:BNE_deception}. \qedhere
\end{itemize}
\end{proof}

Next, we modify the profile $\hat{\s} \circ \bar{\a}$ off-path and construct associated beliefs to obtain a sequential equilibrium $(\s , \mu) = (\s^\e, \mu^\e)$ of $(\HH, \HH', g)$ such that $g (\s ( t)) = g ( \hat{\s} (\bar{\a}(t)))$ for all $t \in T$. The proof is then completed upon observing that for each $\th \in \Th$, we have
\[
    g ( \s^\e ( t^\ast(\th))) =  g ( \hat{\s} (  \bar{\a} (t^\ast (\th)))) = g(\hat{\s} ( \a ( \th))) = f ( \a ( \th)),
\]
where the second equality follows from the definition of $\bar{\a}$ and the last equality holds because $g \circ \hat{\s} = f$. Since $f \circ \a$ is not a selection from $F$, we conclude that $g ( \s^\e ( t^\ast(\th)))  = f( \a( \th)) \not\in F(\th)$ for some $\th \in \Th$, as desired.

We construct $(\s, \mu) = (\s^{\e}, \mu^{\e})$ as follows. First, we define the belief system $\mu$. Then we use the belief system to modify each strategy $\hat{\s}_i \circ \bar{\a}_i$ at certain off-path histories to recover sequential optimality. 

To begin, we introduce notation to classify histories. For each player $i$ and history $h \in \HH$, let $C_i(h)$ be the set of payoff types $\th_i$ in $\Th_i$ such that the sequence of moves taken by player $i$ to reach $h$ is consistent with $\hat{\s}_i (\th_i)$. Let $\dev (h) = \{ j : C_j (h) = \varnothing\}$. In words, $\dev (h)$ is the set of players who, at history $h$, are revealed to have deviated. For each $i$, let $\dev_i (h) = \dev (h) \setminus \{i\}$.

Fix player $i$. Define $\mu_i \colon T_i \times \HH \to \D( \Th \times T_{-i})$ as follows. 
\begin{itemize}
    \item If $\dev_i (h) = \varnothing$, let
\begin{equation} \label{eq:Bayes}
     \mu_i (t_i, h) [\th, t_{-i}] =
     \frac{ \pi^\e ( \th, t_{-i} | t_i) \prod_{j \neq i} [ \bar{\alpha}_j (t_j) \in C_j (h)]}{\sum_{(\th', t_{-i}') \in \Th \times T_{-i}} \pi^\e ( \th', t_{-i}'| t_i) \prod_{j \neq i} [ \bar{\alpha}_j (t_j') \in C_j (h)]},
\end{equation}
where the denominator is positive because $\dev_i(h) = \varnothing$ and under $\pi^{\e} (\cdot |t_i)$, the random variable $\bar{\a}_{-i} (t_{-i})$ has full support on $\Th_{-i}$.


    \item If $\dev_i (h) \neq \varnothing$, let 
    \begin{equation} \label{eq:mu_def}
\mu_i (t_i, h) = \d_{\bar{\a}_i (t_i)} \otimes \Brac{ \hat{\mu}_{i} ( \bar{\a}_i(t_i), h) \otimes \k_{i}^h},
\end{equation}
where $\k_{i}^h \colon \Th_{-i} \to \D(T_{-i})$ is the product kernel $\otimes_{j \neq i} \k_{i,j}^h$, where

\[ 
    \k_{i,j}^h    = \begin{cases}
        d_j^1 &\text{if}~j \in \dev_i(h), \\
        \k_j &\text{if}~j \notin \dev_i(h). 
    \end{cases}
\]
Intuitively, player $i$ infers that each player $j$ in $\dev_i (h)$ has a type $t_j$ with $z_j(t_j) = 1$ and hence that $Z = 1$. Thus, $\bar{\a} (t) = \th$. 
\end{itemize}

Next, we define the strategy profile $\s$. Fix player $i$. Let $z_i$ denote the second-component projection from $T_i$ to $\{0,1,2\}$. We define $\s_i \colon T_i \to M_i$ by modifying $\hat{\s}_i \circ \bar{\a}_i$ after certain deviations by player $i$. Let
\[
    D_i = \{ (t_i, h) \in T_i \times \HH : \dev_i (h) = \varnothing~\text{and}~z_i (t_i) = 0~\text{and}~\bar{\a}_i(t_i) \not\in C_i(h) \}.
\]
For each $(t_i, h) \in (T_i \times \HH) \setminus  D_i$, let 
\[
    \s_i (t_i, h) = \hat{\s}_i (\bar{\a}_i (t_i), h). 
\]
 We extend $\s_i$ to $D_i$ as follows. For each $t_i \in T_i$, let $D_i (t_i) = \{ h \in \HH: (t_i, h) \in D_i \}$. Fixing the opposing strategy $\hat{\s}_{-i} \circ \bar{\a}_{-i}$, player $i$ faces a single-agent decision problem. In this problem, for each history $h$ in $D_i (t_i)$, observe that every reachable nonterminal successor of $h$ is also in $D_i(t_i)$. Therefore, given the beliefs $\mu_i$, we can define $\s_i (t_i)$ on $D_i (t_i)$ to be sequentially optimal by working backwards, starting at the histories that directly precede terminal histories, and breaking ties arbitrarily. By \eqref{eq:Bayes}, the beliefs $\mu_i$ satisfy Bayes' rule with respect to $\hat{\s}_{-i} \circ \bar{\a}_{-i}$ on $D_i (t_i)$. Thus, the one-shot deviation principle applies; see \cite{HendonEtal1996} and \cite{Perea2002}.

Our construction ensures the following \emph{path-equivalence property} between $\s$ and $\hat{\s} \circ \bar{\a}$. For each $t_i \in T_i$ and $m_{-i} \in M_{-i}$, the profiles $(\s_i (t_i), m_{-i})$ and $(\hat{\s}_i (\bar{\a}_i (t_i)), m_{-i})$ induce the same path of play.

\paragraph{Sequential optimality} We prove that the strategy $\s$ is sequentially optimal given the belief system $\mu$. The next result allows us to reason, in certain cases, about the strategies $\hat{\s}_i \circ \bar{\alpha}_i$ rather than $\s_i$.

\begin{lem}[Sequential outcome-equivalence] \label{res:no_deviation} 
Fix a history $h \in \HH$ and a type $t_i \in T_i$. 
\begin{enumerate}[label = (\roman*), ref = \roman*]
    \item \label{it:opposing} For each $m_i \in M_i$, we have
\[
    \mu_i (t_i, h) [ g( m_i, \s_{-i} (t_{-i}); h) = g( m_i, \hat{\s}_{-i} (\bar{\a}_{-i} (t_{-i})); h)] = 1.
\]
    \item \label{it:own} If $(t_i, h) \not\in D_i$, then for each $m_{-i} \in M_{-i}$, 
\[
    g( \s_i(t_i), m_{-i}; h) = g( \hat{\s}_i (\bar{\a}_i(t_i)), m_{-i}; h).
\]
\end{enumerate}
\end{lem}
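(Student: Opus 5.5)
Part (ii) is the easier half, and I would handle it first by backward reasoning on the depth of the subtree rooted at $h$. By construction, $\s_i(t_i,h') = \hat{\s}_i(\bar{\a}_i(t_i),h')$ for every $h'$ with $(t_i,h')\notin D_i$. So it suffices to show the following: if $(t_i,h)\notin D_i$ and $m_{-i}$ is arbitrary, then every successor $h'$ reached from $h$ under $(\s_i(t_i),m_{-i})$ also lies outside $D_i$. Given that, the two strategies prescribe the same move at every history on the induced path, hence the same terminal history and the same outcome.

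To check that membership in $D_i$ is never entered along the path, I go through the three conditions defining $D_i$. If $z_i(t_i)\neq 0$, then $(t_i,h')\notin D_i$ for all $h'$, and there is nothing to prove. If $\dev_i(h)\neq\varnothing$, then $\dev_i(h')\neq\varnothing$ for every successor $h'$, since $C_j$ can only shrink along histories. The remaining case is $\dev_i(h)=\varnothing$, $z_i(t_i)=0$, and $\bar{\a}_i(t_i)\in C_i(h)$. Then player $i$ moves according to $\hat{\s}_i(\bar{\a}_i(t_i))$ at $h$, so $\bar{\a}_i(t_i)\in C_i(h')$ at the next history. Either $\dev_i$ remains empty, or it becomes nonempty; in both cases the pair stays outside $D_i$. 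Induction along the path then finishes (ii).

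For part (i), I need to show the following: $\mu_i(t_i,h)$-almost surely, each opponent $j$'s strategy $\s_j(t_j)$ agrees with $\hat{\s}_j(\bar{\a}_j(t_j))$ on all histories reachable from $h$ under $(m_i,\s_{-i}(t_{-i}))$. By the argument for (ii), applied to each $j$ with an arbitrary $m_{-j}$, it suffices to show that $\mu_i(t_i,h)$ puts probability one on $t_{-i}$ with $(t_j,h)\notin D_j$ for all $j\neq i$. I would then apply (ii) player by player. The path induced by the full profile is the same if we swap each $\s_j(t_j)$ for $\hat{\s}_j(\bar{\a}_j(t_j))$, because, by (ii), each swap leaves the path unchanged for any fixed moves of the others. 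I would do the swaps one at a time, using path-equivalence at each step.

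The remaining work is the support claim, which I split into the two cases defining $\mu_i$. If $\dev_i(h)=\varnothing$, then \eqref{eq:Bayes} restricts to $t_{-i}$ with $\bar{\a}_j(t_j)\in C_j(h)$ for every $j\neq i$, so the third condition of $D_j$ fails. If $\dev_i(h)\neq\varnothing$, I consider any $j\neq i$. If $j\notin\dev(h)$, then $\dev_j(h)\supseteq\dev_i(h)\setminus\{j\}$, which is nonempty because $j\notin\dev_i(h)$; hence $(t_j,h)\notin D_j$ for every $t_j$. If $j\in\dev_i(h)$, then \eqref{eq:mu_def} assigns $t_j$ via $d_j^1$, so $z_j(t_j)=1$ and again $(t_j,h)\notin D_j$.

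I expect the main obstacle to be the bookkeeping in the sub-case $j\in\dev_i(h)$ and $\dev_j(h)=\varnothing$, meaning $j$ is the only revealed deviator. In this situation only the $z_j=1$ assignment in $\mu_i$ keeps $j$ off $D_j$, and this is precisely why the beliefs in \eqref{eq:mu_def} were designed with $d_j^1$.
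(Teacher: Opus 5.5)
Your proposal is correct and follows essentially the same route as the paper. You first establish sequential path-equivalence (starting outside $D_i(t_i)$ and following $\hat{\s}_i(\bar{\a}_i(t_i))$, play never enters $D_i(t_i)$) to get (ii). You then reduce (i) to showing that $\mu_i(t_i,h)$ puts zero probability on $(t_j,h)\in D_j$ for each $j\neq i$, via the same three cases: $\dev_i(h)=\varnothing$ by \eqref{eq:Bayes}; $j\notin\dev_i(h)$, so $\dev_j(h)\neq\varnothing$; and $j\in\dev_i(h)$, so $z_j(t_j)=1$ by \eqref{eq:mu_def}. Your one-at-a-time swapping of opponents' strategies simply spells out what the paper compresses into ``applying sequential path-equivalence to each player $j\neq i$.''
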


Fix player $i$. By \cref{res:no_deviation}.\ref{it:opposing}, it suffices to prove that $\s_i$ is sequentially optimal given $\mu_i$ and $\hat{\s}_{-i} \circ \bar{\a}_{-i}$.  Fix $(t_i, h) \in T_i \times \HH$. If $(t_i, h) \in D_i$, then sequential optimality is immediate, so we may assume $(t_i, h) \notin D_i$. In this case, by \cref{res:no_deviation}.\ref{it:own}, it suffices to show that $\hat{\s}_i (\bar{\a}_i(t_i))$ is sequentially optimal at $(t_i, h)$ given $\mu_i$ and $\hat{\s}_{-i} \circ \bar{\a}_{-i}$. To prove this, we separate into two cases. 
\begin{enumerate}
\item Suppose $\dev_i (h) = \varnothing$ and $\bar{\a}_i (t_i) \in C_i (h)$. Since $\hat{\s} \circ \bar{\a}$ is a BNE (\cref{res:BNE_elaboration}) and, conditional on $t_i$, the history $h$ is reached with positive probability  under $\hat{\s} \circ \bar{\a}$, it follows that $\hat{\s}_i (\bar{\a}_i(t_i))$ is sequentially optimal at $(t_i, h)$. 
\item Otherwise, by the definition of $D_i$, we must have $\dev_i (h) \neq \varnothing$ or $z_i (t_i) \neq 0$. Since $\hat{\s}_i ( \bar{\a}_i (t_i))$ is sequentially optimal at $(\bar{\a}_i (t_i), h)$ given $\hat{\mu}_i$ and $\hat{\s}_{-i}$, it suffices to prove that $\mu_i (t_i, h)$ assigns probability $1$ to the event that $\bar{\a} (t) = \th$ and $\marg_{\Th_{-i}} \mu_i (t_i, h) = \hat{\mu}_i (\bar{\a}_i (t_i), h)$.  If $\dev_i (h) \neq \varnothing$, apply \eqref{eq:mu_def}. If $\dev_i (h) = \varnothing$ and $z_i(t_i) \neq 0$, then $\pi^{\e}[ Z= 1 | t_i] = 1$, so $\pi^{\e} (\cdot | t_i)$ assigns probability $1$ to the event that $\bar{\a} (t) = \th$. Fix $\th_{-i} \in \Th_{-i}$. Let $\th = ( \bar{\a}_i (t_i),  \th_{-i})$. By
\eqref{eq:Bayes}, we have
\begin{equation*}
\begin{aligned}
    \mu_i (t_i, h) [\th] 
    &= 
    \frac{ \sum_{t_{-i} \in T_{-i}} \pi^\e ( \th, t_{-i} | t_i) \prod_{j \neq i} [ \bar{\alpha}_j (t_j) \in C_j (h)]}{\sum_{(\th', t_{-i}') \in \Th \times T_{-i}} \pi^\e ( \th', t_{-i}'| t_i) \prod_{j \neq i} [ \bar{\alpha}_j (t_j') \in C_j (h)]} \\
    &=  \frac{ \sum_{t_{-i} \in T_{-i}} \pi^\e ( \th, t_{-i} | t_i) \prod_{j \neq i} [\th_j \in C_j (h)]}{\sum_{(\th', t_{-i}') \in \Th \times T_{-i}} \pi^\e ( \th', t_{-i}'| t_i) \prod_{j \neq i} [ \th_j' \in C_j (h)]} \\
    &= \frac{  \pi^\e ( \th | t_i) \prod_{j \neq i} [\th_j \in C_j (h)]}{\sum_{\th'\in \Th } \pi^\e ( \th' | t_i) \prod_{j \neq i} [ \th_j' \in C_j (h)]} \\
    &= \frac{  \pi ( \th_{-i} | \bar{\a}_i (t_i)) \prod_{j \neq i} [\th_j \in C_j (h)]}{\sum_{\th_{-i}' \in \Th_{-i} } \pi ( \th_{-i}' | \bar{\a}_i (t_i)) \prod_{j \neq i} [ \th_j' \in C_j (h)]} \\
    &=\hat{\mu}_i ( \bar{\a}_i (t_i), h) [\th_{-i}],
\end{aligned}
\end{equation*}
where the last equality follows from the consistency of $\hat{\mu}_i$. 
\end{enumerate}

\paragraph{Consistency} To prove consistency, we construct a sequence of totally mixed strategies such that for each player $i$, the probability of trembling after observing $z_{i} \neq  1$ becomes arbitrarily small relative to the probability of any sequence of trembles after observing $z_{i} = 1$. 

Since $(\hat{\s}, \hat{\mu})$ is a sequential equilibrium, there exists a sequence, $(\hat{\s}^k, \hat{\mu}^k)$, of totally mixed behavioral strategy profiles and belief-system profiles such that for each player $i$, the following hold: (a) for each $k$, the belief system $\hat{\mu}_i^k$ is derived from $\hat{\s}_{-i}^k$ via Bayes' rule; and (b) the sequence $(\hat{\s}_i^k, \hat{\mu}_i^k)$ converges to $(\d_{\hat{\s}_i}, \hat{\mu}_i)$ pointwise on $\Th_i \times \HH$. For each $\th_i \in \Th_i$ and $h \in \HH$, let $\hat{L}_i^k (\th_i, h)$ denote the probability that type $\th_i$ makes the sequence of moves to reach $h$ under strategy $\hat{\s}_i^k$. Thus,
\[
    \hat{\mu}_i^k (\th_i, h) [ \th_{-i}] = \frac{ \pi (\th_{-i} | \th_i) \prod_{j \neq i} \hat{L}_j^k (\th_j, h)}{ \sum_{\th_{-i}' \in \Th_{-i}} \pi (\th_{-i}' | \th_i) \prod_{j \neq i} \hat{L}_j^k (\th_j', h) }.
\]
Since  $\hat{\s}_i^k \to \d_{\hat{\s}_i}$, we have \begin{equation} \label{eq:limit_equality_hat} 
    \lim_k \hat{L}_i^k (\th_i, h) = [\th_i \in C_i (h)].
\end{equation}

Next, we introduce a parameter sequence that will uniformly bound the tremble probabilities of any player $j$ with $z_j (t_j) \neq 1$. For each $k$, let
\begin{equation} \label{eq:l_k}
    \l^k = \frac{1}{k} \cdot \min_{i, \th_i, h} \hat{L}_i^k ( \th_i, h),
\end{equation}
where the minimum is over all players $i$, types $\th_i \in \Th_i$ and histories $h \in \HH$. Observe that $\l^k > 0$ because each strategy $\hat{\s}_i^k$ is totally mixed, and the minimum is over a finite set.

Define $\s_i^k$ as follows. For each $t_i \in T_i$ and $h \in \HH$, let
\[
    \s_i^k ( t_i, h) 
    =
    \begin{cases}
     \hat{\s}_i^k ( \bar{\a}_i (t_i), h)  &\text{if}~z_i (t_i) = 1, \\
     (1 - \l^k) \d_{\s_i (t_i, h)} + \l^k  \hat{\s}_i^k ( \bar{\a}_i (t_i), h)   &\text{if}~z_i (t_i) \neq 1.
    \end{cases}
\]
These mixtures are computed in $\D( M_i (h))$. By construction, $\s_i^k$ is totally mixed. Moreover, $\s_i^k \to \d_{\s_i}$. This is easily verified by cases. If $z_i (t_i) = 1$, then $D_i(t_i) = \varnothing$ so this holds because $\hat{\s}_i^k \to \d_{\hat{\s}_i}$. If $z_i (t_i) \neq 1$, this holds because $\l_k \to 0$. 

Since $\s_{-i}^k$ is totally mixed, we can define $\mu_i^k \colon T_i \times \HH \to \D( \Th \times T_{-i})$ using Bayes' rule. Let $L_i^k ( t_i, h)$ denote the probability that player $i$ with type $t_i$ makes the sequence of moves to reach $h$ under strategy $\s_i^k$. Let
\begin{equation} \label{eq:def_mu_k}
    \mu_i^k (t_i, h) [ \th, t_{-i}] = \frac{ \pi^\e (\th, t_{-i} | t_i)  \prod_{j \neq i} L_j^k (t_j, h)}{\sum_{(\th', t_{-i}') \in \Th \times T_{-i}} \pi^\e (\th', t_{-i}' | t_i) \prod_{j \neq i} L_j^k (t_j', h)}.
\end{equation}

To prove that $\mu_i^k \to \mu_i$, we record two properties of the likelihoods that follow from the definition of $\s_i^k$. For each $t_i \in T_i$ and $h \in \HH$, if  $z_i (t_i) = 1$, then 
\begin{equation} \label{eq:equality_1}
L_i^k (t_i, h) = \hat{L}_i^k (\bar{\a}_i (t_i), h).
\end{equation}
For each $t_i \in T_i$ and $h \in \HH$,  since $\s_i^k \to \d_{\s_i}$, the path-equivalence property gives
\begin{equation} \label{eq:limit_L}
    \lim_{k} L_i^k ( t_i, h) = [ \bar{\a}_i (t_i) \in C_i (h)].
\end{equation}

We need one additional result. Let $[n] = \{1, \ldots, n\}$. For each $h \in \HH$, let $\ob_i(h) = ([n] \setminus \{i\}) \setminus \dev_i(h)$. For any $t_i \in T_i$ and $h \in \HH$, let
\begin{equation*}
\begin{aligned}
    P_i (t_i) &= \{ (\th, t_{-i}) \in \Th \times T_{-i} : (\pi \otimes \k) [\th, t] > 0 \}, \\
    G_i (h) &= \Set{(\th, t_{-i}) \in \Th \times T_{-i} : z_j(t_j)  = 1~\forall j \in \dev_i (h)~\text{and}~\bar{\a}_\ell (t_\ell) \in C_{\ell} (h)~\forall \ell \in \ob_i (h) }.
\end{aligned}
\end{equation*}
Let $E_i (t_i, h) = P_i(t_i) \cap G_i(h)$. Finally, let 
 \[
    \tilde{\mu}_i^k (t_i, h) = \d_{\bar{\a}_i (t_i)} \otimes \Brac{ \hat{\mu}_{i}^k ( \bar{\a}_i(t_i), h) \otimes \k_{i}^h}.
\]

\begin{lem}[Inference after deviations] \label{res:zero_prob} For each player $i$, type $t_i \in T_i$, and history $h \in \HH$ satisfying $\dev_i (h) \neq \varnothing$, we have
\[
    \lim_k \mu_i^k ( t_i, h) ( E_i (t_i, h)) = \lim_k \tilde{\mu}_i^k(t_i, h) (E_i(t_i, h)) = 1.
\]
\end{lem}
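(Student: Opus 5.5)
The plan is to treat the two limits separately. For the auxiliary beliefs $\tilde{\mu}_i^k$ the argument is a support computation plus a limiting argument on $\hat{\mu}_i^k$. For the Bayes-rule beliefs $\mu_i^k$ in \eqref{eq:def_mu_k} I compare likelihoods term by term, exploiting the gap between $\l^k$ and $\hat{L}^k$ built into \eqref{eq:l_k}. Throughout, $t_i$ and $h$ with $\dev_i(h)\neq\varnothing$ are fixed.

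\emph{The limit for $\tilde{\mu}_i^k$.} Under $\tilde{\mu}_i^k(t_i,h)$ we always have $\th_i=\bar{\a}_i(t_i)$. I first check that $\k_i(t_i\mid\bar{\a}_i(t_i))>0$ in each of the three cases $z_i(t_i)\in\{0,1,2\}$. For $z_i=0$ this holds because $\th_i'\in\a_i^{-1}(\a_i(\th_i'))$. Next, each $t_j$ is drawn from $d_j^1(\th_j)$ or $\k_j(\th_j)$, both of which lie in the support of $\k_j(\th_j)$. Since $\pi$ has full support, $\tilde{\mu}_i^k$ is therefore concentrated on $P_i(t_i)$. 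For $G_i(h)$, two facts do the work:
\begin{itemize}
\item each $j\in\dev_i(h)$ is drawn from $d^1_j$, so $z_j=1$;
\item on the support of $\k_\ell(\th_\ell)$ we have $\bar{\a}_\ell(t_\ell)=\th_\ell$.
\end{itemize}
It thus remains to show that $\hat{\mu}_i^k(\bar{\a}_i(t_i),h)$ puts mass tending to $1$ on $\{\th_\ell\in C_\ell(h)\ \forall \ell\in\ob_i(h)\}$. I write $\hat{\mu}^k_i$ via the product formula and group terms by $\th_{\dev_i(h)}$. For each fixed $\th_{\dev_i(h)}$, the common factor $\prod_{j\in\dev_i(h)}\hat{L}^k_j(\th_j,h)$ cancels in the conditional ratio. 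By \eqref{eq:limit_equality_hat}, the ``good'' $\ob$-part then tends to a positive constant, because $C_\ell(h)\ne\varnothing$ and $\pi$ has full support, while the ``bad'' part tends to $0$. Since the set of values of $\th_{\dev_i(h)}$ is finite, the convergence is uniform and the claim follows.

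\emph{The limit for $\mu_i^k$.} I bound the total $\mu_i^k$-weight of the complement of $E_i(t_i,h)$ against that of $E_i(t_i,h)$. A point $(\th,t_{-i})$ fails $E_i$ in one of three ways:
\begin{enumerate}[label=(\alph*)]
\item some $j\in\dev_i(h)$ has $z_j(t_j)\neq 1$ (this includes every point outside $P_i(t_i)$, since those arise only when $Z=0$ and all types are ambiguous);
\item all $j\in\dev_i(h)$ have $z_j=1$, but some $\ell\in\ob_i(h)$ has $\bar{\a}_\ell(t_\ell)\notin C_\ell(h)$;
\item the point lies outside $P_i(t_i)$ but is not already covered by (a).
\end{enumerate}
For (a), the key estimate is this. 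Since $j\in\dev(h)$, reaching $h$ requires at least one move off $\s_j(t_j)$ (path-equivalence property). For $z_j\neq1$ such a move has probability at most $\l^k$, so $L_j^k(t_j,h)\le\l^k\le \tfrac1k\min\hat{L}^k$. Meanwhile, for $z_j=1$, \eqref{eq:equality_1} gives $L^k_j(t_j,h)=\hat{L}^k_j(\bar{\a}_j(t_j),h)$.

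I then map each bad term to a good term. The good term keeps $\th$ and $t_i$, replaces each offending $t_j$ by $(t_j^{\th_j},1)$, and replaces each $t_\ell$ with $\ell\in\ob_i(h)$ by a type in the support of $\k_\ell(\th_\ell)$, so that $\bar{\a}_\ell=\th_\ell$. Here $\th_\ell$ is re-chosen in $C_\ell(h)$ if necessary. Such good terms have $\pi^\e(\cdot\mid t_i)$-weight at least $c\,\e>0$, with $c$ independent of $k$, using the $\k$-component and full support of $\pi$. Each replaced deviator factor gains a factor of at least $k$. The $\ob$ factors converge by \eqref{eq:limit_L} to $1$ on good terms and to $0$ on type-(b) bad terms. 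Summing over the finitely many bad terms, with prior weights at most $1$, gives $\mu_i^k(t_i,h)(E_i^c)\to0$.

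\emph{Main obstacle.} The step I expect to be hardest is this termwise domination when several players deviate. A crude bound comparing a bad term with one factor $\l^k$ against a good term with $|\dev_i(h)|$ factors of order $k\l^k$ fails. The fix is to hold the deviators' payoff types $\th_j$ fixed and swap types one player at a time, so that each swap costs exactly a factor $\hat{L}_j^k(\th_j,h)/\l^k\ge k$ and the other factors are unchanged. This needs the good replacement term to stay in the support of $\pi^\e(\cdot\mid t_i)$ with weight bounded below. That is exactly why $\k$ puts mass $1/2$ on $z_j=1$ uniformly across payoff types.
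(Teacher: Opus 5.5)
You follow the paper's strategy: a support argument plus \eqref{eq:limit_equality_hat} for $\tilde{\mu}_i^k$, and, for $\mu_i^k$, a likelihood-ratio comparison of each bad point with a point where an offending deviator's type is swapped to a revealing type $(t_j^{\cdot},1)$. That comparison uses $L_j^k(t_j,h)\le\l^k\le\frac1k\hat{L}_j^k(\cdot,h)$ and \eqref{eq:limit_L} for the observed players. Your $\tilde{\mu}_i^k$ argument is fine.

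However, one step fails as written. Your comparison point ``keeps $\th$,'' so in particular it keeps the bad point's $\th_i$. The claim that it has $\pi^\e(\cdot\mid t_i)$-weight at least $c\,\e$ is then false for the points that matter most. Take a player $i$ for whom $\a_i$ is not the identity, and let $t_i=(t_i^{\vartheta},0)$ with $\a_i(\vartheta)\neq\vartheta$. Consider the bad point generated by $Z=0$, namely $\th_i=\vartheta$ and $t_{-i}=(t_j^{\th_j},0)_{j\neq i}$, which falls in your category (a). After the swap, your comparison point has some $z_j=1$, so its $\pi\otimes d^0$ weight is zero. Its $\pi\otimes\k$ weight contains the factor $\k_i(t_i\mid\vartheta)=0$, because under $\k_i$ the ambiguous type $(t_i^{\vartheta},0)$ arises only from the payoff type $\a_i(\vartheta)$. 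So the comparison point has zero prior weight. The domination bound therefore breaks down exactly on the $Z=0$ points, which carry conditional mass $\pi^\e[Z=0\mid t_i]$, close to $1$ for small $\e$.

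The repair is a one-line change, and it is what the paper does. Take the comparison point's payoff-type profile to be $\bar{\a}(t_i,t_{-i}')$, so that player $i$'s coordinate is $\bar{\a}_i(t_i)$. You already checked in the $\tilde{\mu}$ part that $\k_i(t_i\mid\bar{\a}_i(t_i))>0$. The likelihoods $L_j^k$ depend only on types, so changing payoff-type coordinates leaves the likelihood ratio unchanged. Moreover, $\pi^\e(\bar{\a}(t_i,t_{-i}'),t_{-i}'\mid t_i)\ge\pi^\e[Z=1\mid t_i]\,(\pi\otimes\k)(\bar{\a}(t_i,t_{-i}'),t_{-i}'\mid t_i)$, which is a positive constant independent of $k$. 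With that fix your argument is correct.

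The remaining difference is bookkeeping. The paper swaps a single coordinate: one offending deviator in the first case, one observed player in the second. It then bounds the comparison point's posterior by $1$. The comparison point therefore need not lie in $E_i(t_i,h)$, and the multi-deviator issue you flag as the main obstacle never arises. Your simultaneous replacement also works, but it requires more checking.

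One small correction to your closing remark: this lemma only needs $\k_j((t_j^{\th_j},1)\mid\th_j)>0$. The uniform value $1/2$ matters instead for the limit in \eqref{eq:limit_w_ratio} being independent of $(\th,t_{-i})$.
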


 Fix $(t_i, h) \in T_i \times \HH$. We prove that $\mu_i^k (t_i, h) \to \mu_i (t_i, h)$. If $\dev_i (h)= \varnothing$, simply apply \eqref{eq:limit_L} to \eqref{eq:def_mu_k} and compare the result with  \eqref{eq:Bayes}. Therefore, we may assume $\dev_i(h) \neq \varnothing$. The consistency of $\hat{\mu}$ implies that $\tilde{\mu}_i^k(t_i, h) \to \mu_i (t_i, h)$. Therefore, it suffices to prove that $\| \mu_i^k (t_i, h)- \tilde{\mu}_i^k (t_i, h)\|_{\mathrm{TV}} \to 0$. We introduce notation for the numerators in the definitions of $\mu_i^k (t_i, h)$ and $\tilde{\mu}_i^k (t_i, h)$. For each $(\th, t_{-i}) \in \Th \times T_{-i}$, let
\begin{equation} \label{eq:w}
\begin{aligned}
    w_i^k (t_i, h) [ \th, t_{-i}] &= \pi^\e (\th, t_{-i} | t_i)  \prod_{j \neq i} L_j^k (t_j, h), \\
   \tilde{w}_i^k (t_i, h) [\th, t_{-i}] &= [\th_i = \bar{\a}_i (t_i)] \pi ( \th_{-i} | \th_i) \k_i^h[ t_{-i} | \th_{-i}] \prod_{j \neq i} \hat{L}_j^k (\th_j, h).
\end{aligned}
\end{equation}
We check that $w_i^k (t_i, h)$ and $\tilde{w}_i^k (t_i, h)$ are strictly positive on $E_i(t_i, h)$. The likelihoods appearing in \eqref{eq:w} are strictly positive because $\s^k$ and $\hat{\s}^k$ are totally mixed.  Since $\pi^{\e} [ Z = 1| t_i] > 0$, we have $\pi^{\e} (\th, t_{-i} | t_i) > 0$ for each $(\th, t_{-i}) \in P_i(t_i)$.  And for each $(\th, t_{-i}) \in E_i (t_i, h) = P_i(t_i) \cap G_i(h)$, we have $\bar{\a} (t) = \th$  and $\k_{i,j}^h [t_j | \th_j] > 0$ for all $j \neq i$. We show that
\begin{equation} \label{eq:limit_w_ratio}
 \max_{(\th, t_{-i}) \in E_i(t_i, h)} \Abs{ \frac{w_i^k (t_i, h) [ \th, t_{-i}]}{\tilde{w}_i^k (t_i, h) [\th, t_{-i}]} -  \frac{\pi^{\e} [ Z= 1 | t_i]}{2^{|\dev_i (h)|}}} \to 0. 
\end{equation}
This limit $\pi^{\e} [ Z= 1 | t_i]/2^{|\dev_i (h)|}$ is strictly positive and does not depend on $(\th, t_{-i})$, so we conclude that
\[
    \| \mu_i^k (t_i, h)[ \cdot | E_i(t_i, h)] - \tilde{\mu}_i^k (t_i, h)[ \cdot | E_i(t_i, h)] \|_{\mathrm{TV}} \to 0.
\]
By \cref{res:zero_prob}, it follows that $ \| \mu_i^k (t_i, h) - \tilde{\mu}_i^k (t_i, h)\|_{\mathrm{TV}} \to 0$.

Now we prove \eqref{eq:limit_w_ratio}. Since $E_i (t_i, h)$ is finite, it suffices to prove pointwise convergence over  $E_i (t_i, h)$. Fix $(\th, t_{-i}) \in E_i (t_i, h)$. The definition of $G_i(h)$ implies that $z_j (t_j) = 1$ for all $j \in \dev_i(h)$. Since $\dev_i(h) \neq \varnothing$, we have
\begin{equation} \label{eq:pi_simple}
\begin{aligned}
    \pi^{\e} ( \th, t_{-i} | t_i) 
    &= \pi^{\e} [ Z= 1 | t_i] ( \pi \otimes \k) [\th, t_{-i} | t_i]\\
    &= \pi^{\e} [ Z= 1 | t_i] \pi (\th_{-i} | \th_i) \k_{-i} (t_{-i} | \th_{-i}).
\end{aligned}
\end{equation}
Substitute in \eqref{eq:pi_simple} and apply \eqref{eq:equality_1} for each player $j \in \dev_i(h)$ to get
\[
    w_i^k (t_i, h) [ \th, t_{-i}] 
    = \pi^{\e} [ Z= 1 | t_i] \pi (\th_{-i} | \th_i) \k_{-i} ( t_{-i} | \th_{-i}) \prod_{j \in \dev_i(h)} \hat{L}_j^k (\th_j, h) \prod_{\ell \in \ob_i(h)} L_\ell^k ( t_\ell, h).
\]
Dividing by $\tilde{w}_i^k (t_i, h)[\th, t_{-i}]$, the likelihood factors $\hat{L}_j^k (\th_j, h)$ cancel for each $j \in \dev_i (h)$. For all $\ell \in \ob_i (h)$, we have
$\k_{i,\ell}^h= \k_{\ell}$. 
For all $j \in \dev_i(h)$, we have $\k_j (t_j | \th_j) = 1/2$ and $\k_{i,j}^h[t_j|\th_j] = 1$. Therefore,
\begin{equation*}
\begin{aligned}
    \frac{w_i^k (t_i, h) [ \th, t_{-i}]}{\tilde{w}_i^k (t_i, h) [\th, t_{-i}]} 
    &= \pi^{\e} [ Z= 1 | t_i] 2^{-|\dev_i (h)|} \prod_{\ell \in \ob_i (h)} \frac{L_\ell^k ( t_\ell, h)}{\hat{L}_\ell^k (\th_\ell, h)}.
\end{aligned}
\end{equation*}
For each $\ell \in \ob_i (h)$,  we have $\bar{\a}_\ell (t_\ell) = \th_\ell \in C_\ell (h)$, so $L_\ell^k (t_\ell, h) \to 1$ by \eqref{eq:limit_L} and $\hat{L}_\ell^k (\th_\ell, h) \to 1$ by \eqref{eq:limit_equality_hat}.  Hence, \eqref{eq:limit_w_ratio} follows.

\subsection{Proof of Lemma~\ref{res:no_deviation}}

Fix $i$ and $t_i \in T_i$.  We begin by proving a stronger, sequential version of the path-equivalence property. Let $D_i(t_i) = \{ h \in \HH : (t_i, h) \in D_i \}$. Fix $h \notin D_i (t_i)$. We claim that starting from $h$, if player $i$ uses $\hat{\s}_i (\bar{\a}_i (t_i))$, then play will never reach a history in $D_i(t_i)$, no matter how their opponents play. Indeed, if $\dev_i(h) \neq \varnothing$, then $D_i (t_i)$ does not contain any successor of $h$. If $z_i (t_i) \neq 0$, then $D_i(t_i) = \varnothing$. If $\dev_i (h) = \varnothing$ and $z_i (t_i) = 0$, then $\bar{\a}_i (t_i) \in C_i( h)$ and this property is preserved as long as player $i$ follows $\hat{\s}_i ( \bar{\a}_i (t_i))$. 

From sequential path-equivalence, part~\ref{it:own} is immediate. Now we prove part~\ref{it:opposing}. Fix $h \in \HH$. For each $j$, let  $D_j(h) = \{ t_j \in T_j : (t_j, h) \in D_j\}$. By applying sequential path-equivalence to each player $j \neq i$, it suffices to show that $\mu_i (t_i, h) [ t_j \in D_j(h)] = 0$ for each $j \neq i$. Fix $j \neq i$. If $\dev_i(h) = \varnothing$, then from \eqref{eq:Bayes}, we have $\mu_i (t_i, h) [ \bar{\a}_j(t_j) \in C_j(h)]= 1$, so we are done. Therefore, we may assume $\dev_i (h) \neq \varnothing$. We separate into cases.  If $j \notin \dev_i (h)$, then $\dev_j (h) \neq \varnothing$, so $D_j(h) =\varnothing$. If $j \in \dev_i(h)$, then from \eqref{eq:mu_def}, we have $\mu_i (t_i, h)[ z_j(t_j) = 1] = 1$, so we are done.

\subsection{Proof of Lemma~\ref{res:zero_prob}}

Fix player $i$, type $t_i \in T_i$, and history $h\in\HH$ with
$\dev_i(h)\neq\varnothing$. Fix $(\th,t_{-i})\in(\Th\times T_{-i})\setminus E_i(t_i,h)$. We show that $\mu_i^k(t_i,h)[\th,t_{-i}]\to0$ and $\tilde{\mu}_i^k(t_i,h)[\th,t_{-i}]\to 0$.  Since the set $(\Th\times T_{-i})\setminus E_i(t_i,h)$ is finite, the
result then follows. 

First, suppose that there exists $j\in\dev_i(h)$ such that
$z_j(t_j)\neq1$. Since $\k_{i,j}^h = d_j^1$, we immediately get $\tilde{\mu}_i^k(t_i,h)[\th,t_{-i}]=0$ for each $k$. To analyze $\mu_i^k(t_i,h)[\th,t_{-i}]$, consider the type $t_j'=(t_j^{\bar\a_j(t_j)},1)$. Let $t_{-i}'=(t_j',t_{-\{j,i\}})$ and $\th'=\bar\a(t_i,t_{-i}')$. By \eqref{eq:equality_1}, we have $L_j^k (t_j', h) = \hat{L}_j^k (\bar{\a}_j(t_j), h)$. Since $j \in \dev_i (h)$, it follows from the path-equivalence property that player $j$ with type $t_j$ must tremble at least once to reach $h$, so $L_j^k ( t_j, h) \leq \l^k$. Therefore, 
\begin{equation} \label{eq:likelihood_bound}
    \frac{ L_j^k ( t_j, h)}{L_j^k (t_j', h)} \leq \frac{ \l_k}{\hat{L}_j^k (\bar{\a}_j(t_j), h)} \leq \frac{1}{k}.
\end{equation}
Since $\pi^{\e} [ Z= 1 | t_i] > 0$ and $\th'=\bar\a(t_i,t_{-i}')$, we have $\pi^{\e} ( \th', t_{-i}' | t_i )> 0$. By Bayes' rule, 
\begin{equation} \label{eq:k_bound}
\begin{aligned}
    \mu_i^k (t_i, h) [ \th, t_{-i}] 
    &\leq \frac{1}{k} \cdot \frac{ \pi^{\e}( \th, t_{-i} | t_i)}{\pi^{\e} ( \th', t_{-i}' | t_i)} \cdot \mu_i^k (t_i, h) [\th', t_{-i}'].
\end{aligned}
\end{equation}
Pass to the limit in $k$ to conclude that $\mu_i^k(t_i, h) [ \th, t_{-i}] \to 0$.

We may therefore suppose that $z_j(t_j)=1$ for every $j \in\dev_i(h)$. Since $\dev_i(h) \neq \varnothing$, we have $\pi^{\e} (\th, t_{-i} | t_i) = \pi^{\e} (Z = 1 | t_i) (\pi \otimes \k) ( \th, t_{-i} | t_i)$. Thus, if $(\th, t_{-i}) \notin P_i(t_i)$, we have $\mu_i^k(t_i, h) [\th, t_{-i}] = 0$ and $\tilde{\mu}_i^k(t_i,h)[\th, t_{-i}] = 0$ for all $k$. Therefore, we may assume that $(\th, t_{-i}) \in P_i(t_i)$.  Since $(\th,t_{-i})\notin E_i(t_i,h)=P_i(t_i)\cap G_i(h)$ and $z_j(t_j)=1$ for every $j \in\dev_i(h)$,   it follows that $\bar\a_\ell(t_\ell)\notin C_\ell(h)$ for some $\ell\in\ob_i(h)$. Since $\ell\in\ob_i(h)$, the set $C_\ell(h)$ is nonempty. Choose
$\th_\ell'\in C_\ell(h)$. Let $t_\ell'=(t_\ell^{\th_\ell'},1)$. Let $t_{-i}'=(t_\ell',t_{-\{\ell,i\}})$ and $\th'=\bar\a(t_i,t_{-i}')$. By construction, $(\th', t_{-i}')$ is in $P_i(t_i)$ and $t'$ agrees with $t$ over $\dev_i(h)$, so $w_i^k (t_i, h)[\th', t_{-i}']$ and $\tilde{w}_i^k (t_i, h)[\th', t_{-i}']$ are both strictly positive. We have
\[
\mu_i^k(t_i,h)[\th,t_{-i}]
=
\frac{\pi^\e(\th,t_{-i}| t_i)}
     {\pi^\e(\th',t_{-i}'| t_i)}
\frac{L_\ell^k(t_\ell,h)}
     {L_\ell^k(t_\ell',h)}
\mu_i^k(t_i,h)[\th',t_{-i}'].
\]
Since $\bar{\a}_\ell (t_\ell) \not\in C_\ell (h)$ and $\bar{\a}_\ell ( t_{\ell}') = \th_\ell' \in C_\ell (h)$,  we use \eqref{eq:limit_L} to conclude that $L_\ell^k(t_\ell,h)/L_\ell^k(t_\ell',h)$ converges to $0$. Hence, $\mu_i^k(t_i,h)[\th,t_{-i}] \to 0$. Turning to $\tilde{\mu}_i^k$, since $t_{-i}'$ and $t_{-i}$ only differ in position $\ell$, we have
\[
\tilde{\mu}_i^k(t_i,h)[\th,t_{-i}]
=
\frac{ \pi (\th_{-i} | \th_i)}{\pi ( \th_{-i}' | \th_i') } \frac{ \k_\ell (t_\ell | \th_\ell)}{\k_\ell (t_\ell'|\th_\ell')} \frac{\hat L_\ell^k(\th_\ell,h)}
     {\hat L_\ell^k(\th_\ell',h)} \tilde{\mu}_i^k(t_i,h)[\th',t_{-i}'].
\]
Since $\th_\ell = \bar{\a}_\ell (t_\ell) \notin C_\ell (h)$ and $\th_\ell' \in C_\ell (h)$, we use \eqref{eq:limit_equality_hat} to conclude that $\hat L_\ell^k(\th_\ell,h)/\hat L_\ell^k(\th_\ell',h)$ converges to $0$, so $\tilde{\mu}_i^k(t_i,h)[\th,t_{-i}] \to 0$.

\subsection{Proof of Remark~\ref{res:integer}}
 If $M_i (h)$ is countable for each $h \in \HH$ then  $\HH$ is countable. If in addition $(\HH, \HH', g)$ admits a sequential best response for each player at each history for all beliefs and opposing pure strategies,\footnote{This property is satisfied, for instance, by the integer-game mechanisms in \cite{BerginSen1998} and \cite{Baliga1999}.} the result goes through: The existence of best responses ensures that we can construct sequentially optimal strategies on $D_i$, and the  only other part of the proof where finiteness is used is in defining $\l^k$.  If $\HH$ is countably infinite, modify this step as follows. Choose a bijection $K \colon \HH \to \N$. For each $k$, let $\HH^k = \{ h \in \HH: K(h) \leq k \}$. Then let 
  \begin{equation}
    \l^k = \frac{1}{k} \cdot \min_{i \in [n], \th_i \in \Th_i, h \in \HH^k} \hat{L}_i^k ( \th_i, h).
\end{equation}  
Thus, $\l^k > 0$. For each fixed history $h$, we have $h \in \HH^{k}$ for all $k \geq K(h)$, so the argument goes through. Note, however, that the convergence may not be uniform over $T_i \times \HH$.


\subsection{Proof of Proposition~\ref{res:KPT_counterexample}}

Consider the social choice environment from \cref{sec:example}, the SCF $f$ given by $f(\th) = x^{\th}$, and the three-stage mechanism $(\HH, \HH', g)$ introduced in \cref{sec:example}. In \cref{sec:equilibria}, we proved that $(\HH, \HH', g)$ sequentially implements $f$. Now we check that $f$ is not Bayesian monotone. Consider any deception profile $\a$ such that $\a_1$ and $\a_2$ are both bijections, and at least one is different from the identity. Since $\a$ is not the identity, we have $f \circ \a \neq f$. Suppose that there exists some player $i$, payoff type $\bar{\th}_i$, and social choice function $f'$ such that $f' \circ \a \succ_{\bar{\th}_i} f \circ \a$ and, for each $\th_i \in \Th_i$, we have $f \succsim_{\th_i} f'_{i, \a_i (\bar{\th}_i)}$. In particular, taking $\th_i = \bar{\th}_i$ and using indifference over the image of $f$, we get

\[
    f' \circ \a
    \succ_{\bar{\th}_i}
    f \circ \a
    \sim_{\bar{\th}_i}
    f
    \succsim_{\bar{\th}_i}
    f'_{i,\a_i(\bar{\th}_i)}.
\]
Since $\pi$ is uniform, 
\[
 \sum_{\th_{-i} \in \Th_{-i}} u_i ( f' ( \a_i( \bar{\th}_i),  \a_{-i}(\th_{-i})), \bar{\th}_i) >   \sum_{\th_{-i} \in \Th_{-i}} u_i ( f' ( \a_i( \bar{\th}_i),  \th_{-i}), \bar{\th}_i).
\]
This is a contradiction because the two summations are the same (since $\a_{-i} (\Th_{-i}) = \Th_{-i}$).

Let $\bar{\h} = 1/20$. Let $(T, t^\ast, \hat{\pi})$ be a KPT information structure satisfying $\| \hat{\pi} - \pi \otimes t^\ast \|_{\mathrm{TV}} < \bar{\h}$. Let $\h = \| \hat{\pi} - \pi \otimes t^\ast \|_{\mathrm{TV}}$. For each $\th \in \Th$, we have $\hat{\pi}[\th, t^\ast(\th)] \geq 1/4 - \h > 0$, so type $t_i^\ast (\th_i)$ knows that their payoff type is $\th_i$. Let $(\s, \mu)$ be a sequential equilibrium. We claim that for each player $i$ and payoff type $\th_i$,  type $t_i^\ast(\th_i)$ reports $\th_i$ in the first stage. From the claim, it then follows that for each $\th_i', \th_{-i} \in \{0,1\}$, if player $i$ reports $\th_i'$ then type $t_{-i}^\ast(\th_{-i})$ of player $-i$ assigns to the event $\th_i = \th_i'$ probability strictly more than $2/3$ since
\[
    \frac{\hat{\pi} ( t^\ast (\th_i', \th_{-i}))}{\hat{\pi} (  t_{-i}^\ast (\th_{-i})) - \hat{\pi} ( t^\ast (1-\th_i', \th_{-i}))} \geq \frac{1/4 - \h}{1/4 + \h} > 2/3.
\]
Hence, player $-i$ does not challenge when they receive the distinguished signal (since  a challenged player defends if and only if their report is truthful). We conclude that $g ( \s (t^\ast(\th))) = f(\th)$ for all $\th \in \Th$.

It remains to prove the claim. Suppose that for some player $i$ and payoff type $\th_i$, type $t_i^\ast (\th_i)$ reports $1 - \th_i$ under $\s_i$. For each $\th_{-i} \in \Th_{-i}$, if player $i$ reports $1 - \th_i$, then type $t_{-i}^\ast (\th_{-i})$ of  player $-i$ assigns to the event $\th_i \neq \th_i'$ probability strictly more than $1/3$ since
\[
   \frac{ \hat{\pi} ( t^\ast (\th_i, \th_{-i}))}{\hat{\pi} ( t_{-i}^\ast (\th_{-i}))} \geq \frac{1/4 -  \h }{1/2 + \h} > 1/3. 
\]
Hence, player $-i$ challenges the report $1 - \th_i$ whenever they receive a distinguished signal. If challenged, player $i$'s payoff from reporting $1 - \th_i$ is at most $-3 + 2 = -1$. Otherwise, their payoff is at most $2$. Type $t_i^\ast(\th_i)$ believes that their opponent received a distinguished signal with probability strictly more than $2/3$ since $(1/2 - \h)/(1/2 + \h) > 2/3$. Therefore, 
their expected payoff is strictly less than $0$. But by reporting truthfully, not challenging, and defending if challenged, player $i$ guarantees a payoff of $0$, yielding a contradiction.

\subsection{Proof of Proposition~\ref{res:CI_counterexample}}

We modify the social choice environment from \cref{sec:example}. As before, there are two players, denoted $i= 1,2$. Each player $i$ observes their payoff type $\th_i \in \Th_i = \{ 0,1\}$, which is drawn uniformly and independently. But now, only player $2$ can challenge, and only when player $1$ reports $\th_1' = 1$. Formally, the set of decisions is 
\[
    A = \Set{ x^{\th', c_2, d_1}: \th' \in \Th~\text{and}~c_2, d_1 \in \{0,1\}}.
\]
Utility functions will be specified below. Let $(\HH, \HH', g)$ be the following three-stage mechanism.
\begin{enumerate}
    \item Each player $i$ reports $\th_i' \in \Th_i$. If $\th_1' = 0$, the game ends and the decision is $x^{\th'} \coloneq x^{\th', 0,0}$. Otherwise, play proceeds to stage 2. 
    \item Players observe the reports from stage $1$. Player $2$ chooses whether to challenge ($c_2 =1$) or not ($c_2 = 0$). If player $2$ does not challenge, the game ends and the decision is $x^{\th'} \coloneq x^{\th', 0,0}$. Otherwise, play proceeds to stage 3.
    \item Player 1 observes the challenge decision from stage $2$. Player $1$ chooses whether to defend ($d_1 =1 $) or not ($d_1 = 0$). The game ends and the decision is $x^{ \th', 1,d_1}$.
\end{enumerate}

The players have private values  given by the following tables:  For player $1$, 
\begin{equation} \label{eq:u_1}
\begin{array}{c|cc}
u_1(x^{\th'},0)
    & \th_2'=0 & \th_2'=1 \\ \hline
\th_1'=0 & 0 & 0 \\
\th_1'=1 & 0 & -1
\end{array}
\qquad
\begin{array}{c|cc}
u_1(x^{\th'},1)
    & \th_2'=0 & \th_2'=1 \\ \hline
\th_1'=0 & 0 & -2 \\
\th_1'=1 & -1 & 0
\end{array}.
\end{equation}
For player~2,
\begin{equation} \label{eq:u_2}
\begin{array}{c|cc}
u_2(x^{\th'},0)
    & \th_2'=0 & \th_2'=1 \\ \hline
\th_1'=0 & 0 & -1 \\
\th_1'=1 & 0 & -1
\end{array}
\qquad
\begin{array}{c|cc}
u_2(x^{\th'},1)
    & \th_2'=0 & \th_2'=1 \\ \hline
\th_1'=0 & -1 & 0 \\
\th_1'=1 & 0 & -1
\end{array}.
\end{equation}

For all $\th' \in \Th$ and $c_2, d_1 \in \{0,1 \}$, let 
\begin{equation}
\begin{aligned}
     u_1 (x^{\th', c_2, d_1}, \th_1) &= u_1 (x^{\th'}, \th_1) + c_2 v_1 ( \th_1', d_1, \th_1), \\
     u_2( x^{\th', c_2, d_1}, \th_2) &= u_2 (x^{\th'}, \th_2) + c_2 w_2 (d_1),
\end{aligned}
\end{equation}
where 
\[
    v_1 ( \th_1', d_1, \th_1) = \begin{cases}
        -1 + d_1 &\text{if}~\th_1 = \th_1', \\
        -3 - d_1 &\text{if}~\th_1 \neq \th_1',
    \end{cases}
\]
and 
\[
    w_2 (d_1) = 2 -  3 d_1.
\]
 Upon being challenged, it is strictly optimal
for player 1 to defend if they reported truthfully and not to defend if they misreported.
For player $2$, challenging changes their utility by $2$ if player $1$ does not defend but by $-1$ if player $1$ defends. Define the SCF $f \colon \Th \to A$ by $f(\th) = x^{\th}$. Define the deception profile $\a = (\a_1, \a_2)$ by $\a_1 (\th_1) = 1- \th_1$ and $\a_2( \th_2) = 0$.

To prove \cref{res:CI_counterexample}, we begin with a lemma about the set of BNE. 

\begin{lem}[BNE outcomes] \label{res:BNE} Under $\pi$, every BNE of $(\HH, \HH', g)$ induces either the SCF $f$ or $f \circ \a$.
\end{lem}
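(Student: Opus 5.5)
The plan is a finite case analysis over pure strategy profiles. It uses two facts: BNE only requires optimality at histories reached with positive probability, and $\pi$ is uniform with independent private values, so each player's interim problem reduces to a best response against the opponent's four possible type-to-action maps. Write a pure BNE as follows. Player $1$ has a report map $r_1 \colon \Th_1 \to \Th_1$ together with stage-3 defense choices. Player $2$ has a report map $r_2 \colon \Th_2 \to \Th_2$ together with challenge choices $c_2(\th_2, \th')$ after reports with $\th_1' = 1$. The target is to show that $g \circ \s$ equals either $x^{\th}$ for all $\th$ (truthful reports, no challenge on path) or $x^{(1-\th_1, 0)}$ for all $\th$ (flipped report by player $1$, constant report $0$ by player $2$, no challenge on path).

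\textbf{Step 1: no challenge on the equilibrium path.} Suppose some type of player $2$ challenges after an on-path report profile with $\th_1' = 1$. On path, each type of player $1$ reporting $1$ defends if and only if it reported truthfully, because $v_1$ makes this strictly optimal whatever player $1$ believes about $\th_2$. I would then compare the payoffs $u_2(x^{\th'},1) + w_2(d_1)$ with $u_2(x^{\th'},1)$ across the cases for $r_1^{-1}(1)$. The aim is to show that either player $2$ strictly prefers not to challenge, or the type of player $1$ that misreports $1$ and is challenged gets at most $-3 + \text{(base payoff)}$. In the latter case reporting $0$ is a profitable deviation, since it ends the game at a base payoff of at least $-2$ and in the relevant cells $0$. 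I expect this step to be the main obstacle. The interaction between the report tables and the challenge payoffs has to be checked cell by cell, and challenges after reports that occur with positive probability but are not truthful must be ruled out carefully, using player $1$'s strict incentive to avoid being challenged after a misreport.

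\textbf{Step 2: classifying report maps once there is no challenge on path.} With no challenge, payoffs are given by the tables \eqref{eq:u_1} and \eqref{eq:u_2}. I would first enumerate player $2$'s best responses to each $r_1$:
\begin{itemize}
\item type $\th_2 = 0$ strictly prefers reporting $0$;
\item type $\th_2 = 1$ prefers $\th_2' = 1 - \th_1'$, so its choice depends on the distribution of $r_1(\th_1)$.
\end{itemize}
Next I would enumerate player $1$'s best responses to each $r_2$. Type $\th_1 = 0$ weakly prefers reporting $0$, and strictly so if $r_2$ puts mass on $1$. Type $\th_1 = 1$ compares $(0,-2)$ with $(-1,0)$ against the distribution of $\th_2'$.

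\textbf{Step 3: checking mutual best responses.} I would check which combinations of the four possible $r_1$ and the four possible $r_2$ are mutual best responses, keeping in mind that a deviation by player $1$ to report $1$ could trigger a challenge, which only makes such deviations less attractive. I expect exactly two outcome-distinct survivors:
\begin{itemize}
\item $r_1$ and $r_2$ both the identity, which induces $f$;
\item $r_1(\th_1) = 1 - \th_1$ with $r_2 \equiv 0$, which induces $f \circ \a$.
\end{itemize}
Constant maps and mixed combinations should each give some type a strict profitable deviation via the $-1$ and $-2$ entries. Profiles that differ from these two only in their off-path challenge and defense choices induce the same outcome, which is all the lemma requires.
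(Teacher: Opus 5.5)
Your plan is correct and is essentially the paper's argument: the same case split on which types of player~1 report $1$ rules out on-path challenges, after which $r_2(0)=0$ and the remaining report maps are classified, with the paper streamlining your enumeration by splitting only on the report of type $\theta_2 = 1$. One point needs to be stated explicitly: eliminating $r_1 \equiv 0$ with $r_2 = \mathrm{id}$ requires that type $\theta_1 = 1$'s deviation to the truthful report $1$, followed by defending, earns exactly its base payoff $-1/2 > -1$ even if player~2 challenges off path. It is this neutralization, not the fact that challenges only make deviations less attractive, that makes the deviation profitable.
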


\begin{proof}[Proof of \cref{res:BNE}] Consider an arbitrary BNE $\s$. First, we check that no challenge occurs on path. Recall that player $2$ can challenge only if player $1$ reports $\th_1' = 1$. Suppose that this report is on path. If type $\th_1 = 0$ reports $\th_1' = 0$, then player $2$ concludes that the report $\th_1' = 1$ is truthful, and hence player $2$ cannot challenge this report (since player $1$ will defend). If type $\th_1 = 0$ reports $\th_1' = 1$ and player $2$ ever challenges this report, then type $\th_1 = 0$ gets utility strictly below $0$, and hence type $\th_1 = 0$ can profitably deviate to $\th_1' = 0$ to secure a payoff of $0$. 

It remains to show that, under $\s$, the players either (i) always report truthfully, or (ii) report according to $\a$. Type $\th_2 = 0$ can guarantee payoff $0$ by reporting $\th_2' = 0$ (and never challenging). Since there are no challenges on path, type $\th_2 = 0$ cannot report $\th_2 ' = 1$ since this would yield payoff $-1$.  We consider two cases according to the report of type $\th_2 = 1$. 
\begin{itemize}
    \item If type $\th_2 = 1$ reports $\th_2' = 1$, then both types of player $1$ uniquely maximize the expectation of $u_1$ by reporting truthfully, and hence must do so (since defending after reporting truthfully and being challenged yields the same payoff as not being challenged). Thus, (i) holds. 
    \item If type $\th_2 =1$ reports $\th_2' = 0$, then type $\th_1 = 1$ strictly prefers to report $\th_1' = 0$. It follows that type $\th_1 = 0$ reports $\th_1' = 1$; otherwise, type $\th_2 = 1$ could profitably deviate to the report $\th_2' = 1$. \qedhere
\end{itemize}
\end{proof}

Using \cref{res:BNE}, we now prove \cref{res:CI_counterexample} in three steps. 

1. We show that $(\HH, \HH', g)$ sequentially implements $f$. By \cref{res:BNE}, it suffices to show that $f$ can be induced by a sequential equilibrium of $(\HH, \HH', g)$, but $f \circ \a$ cannot.  For $f$,  consider the following strategy profile: both players report truthfully; player $2$ never challenges; and, when challenged, player $1$ defends if and only if they reported truthfully.  Sequential optimality is clear in stage $3$, and hence in stage $2$. For stage $1$, optimality follows from computing expected payoffs, under the uniform prior, in \eqref{eq:u_1} and \eqref{eq:u_2}. 

For $f \circ \a$, suppose for a contradiction that there exists a sequential equilibrium that induces $f \circ \a$. Consider the report history $\th' = (1,0)$. Player $2$ is certain that player $1$ misreported. By sequential optimality, player $1$ will not defend a challenge, so challenging is strictly optimal for player $2$, hence $f \circ \a$ is not induced. 

2.  We show that $f$ is not Bayesian monotone. First note that $f \circ \a \neq f$. Player $1$ cannot be a whistle-blower because each type of player $1$ gets expected utility $0$ under $f \circ \a$, and no outcome gives any type of player $1$ a strictly positive utility. For player $2$, suppose for a contradiction that there exists some payoff type $\bar{\th}_2$ and some social choice function $f'$ such that $f' \circ \a \succ_{\bar{\th}_2} f \circ \a$ and, for each $\th_2 \in \Th_2$, we have $f \succsim_{\th_2} f'_{2, \a_2 (\bar{\th}_2)}$. In particular, we take $\th_2 = \bar{\th}_2$. Since both types of player $2$ are indifferent between $f$ and $f \circ \a$, we have
\[
    f' \circ \a
    \succ_{\bar{\th}_2}
    f \circ \a
    \sim_{\bar{\th}_2}
    f
    \succsim_{\bar{\th}_2}
    f'_{2,\a_2(\bar{\th}_2)}.
\]
Since $\pi$ is uniform, plugging in the function $\a$ gives
\[
 \sum_{\th_{1} \in \Th_{1}} u_2 ( f' ( 1- \th_1, 0), \bar{\th}_2) >   \sum_{\th_{1} \in \Th_{1}} u_2 ( f' (\th_1, 0), \bar{\th}_2).
\]
This is a contradiction because the two summations are the same.

3. We complete the proof. Since the game is finite, there exists $\h_0 > 0$ such that, for any information structure $(T, t^\ast, \hat{\pi})$ satisfying $\| \hat{\pi} - \pi \otimes t^\ast \|_{\mathrm{TV}} < \h_0$, every sequential equilibrium $(\s, \mu)$ of $(\HH, \HH', g)$ is such that the strategy profile $\bar{\s}$ defined by $\bar{\s}_i (\th_i) = \s_i ( t_i^\ast (\th_i))$ is a BNE under $\pi$.  Let $\bar{\h}  = \min \{ \h_0, 1/16\}$. 

Let $(T, t^\ast, \hat{\pi})$ be a CI information structure such that $ \| \hat{\pi} - \pi \otimes t^\ast \|_{\mathrm{TV}} < \bar{\h}$. Let $\h = \| \hat{\pi} - \pi \otimes t^\ast \|_{\mathrm{TV}}$. Let $(\s, \mu)$ be a sequential equilibrium. Since $ \| \hat{\pi} - \pi \otimes t^\ast \|_{\mathrm{TV}} < \h_0$, the strategy profile $\bar{\s}$ defined by $\bar{\s}_i (\th_i) = \s_i ( t_i^\ast (\th_i))$ is a BNE under $\pi$. By \cref{res:BNE}, the SCF induced by $\bar{\s}$ is either $f$ or $f \circ \a$. Suppose for a contradiction that $\bar{\s}$ induces $f \circ \a$. Since $\bar{\s}$ is a BNE under $\pi$, it follows that $\bar{\s}_1(0)$ must specify defending, if challenged, after report profile $(1,0)$. By the sequential optimality of $\s$, type $t_1^\ast(0)$ must assign positive probability to the event $\th_1 = 1$.  We check that type $t_1^\ast (0)$ can profit by deviating to the report $\th_1' = 0$, which cannot be challenged. 
\begin{itemize}
    \item  If $\th_1 = 0$, then it can be seen from \eqref{eq:u_1} that this deviation weakly increases player $1$'s payoff. 
    \item If $\th_1 = 1$, and player $2$'s signal is either $t_2^\ast(0)$ or $t_2^\ast (1)$, then player $2$ reports $0$ and does not challenge, so by \eqref{eq:u_1}, player $1$'s gain is $1$. Player $1$'s worst gain from the deviation is $-2$. By conditional independence, the probability that player $2$ does not have either distinguished signal, conditional on each signal for player $1$ and each state, is at most $\h/(1/4 - \h) < 1/3$, where we used the fact that $\h < 1/16$. Thus, player $1$'s expected gain from deviating is strictly more than $(2/3) - 2 (1/3) = 0$. 
\end{itemize}
Since $t_1^\ast(0)$ assigns positive probability to the event $\th_1 = 1$, this deviation is strictly profitable, which is a contradiction.

\setstretch{1}

\bibliography{lit.bib}
\bibliographystyle{ecta.bst}

\end{document}